\newtheorem{observation}{Observation}
\newcommand{\cov}[0]{\operatorname{cov}}
\DeclarePairedDelimiter{\ceil}{\lceil}{\rceil}
\DeclarePairedDelimiter{\floor}{\lfloor}{\rfloor}
\begin{document}

\title{Hedonic Expertise Games}
\titlerunning{Hedonic Expertise Games}

\author{
Bugra Caskurlu, Fatih Erdem Kizilkaya, and Berkehan Ozen}

\authorrunning{B. Caskurlu \and F. E. Kizilkaya \and B. Ozen}

\institute{
TOBB University of Economics and Technology, Ankara, Turkey\\
\email{bcaskurlu@etu.edu.tr},
\email{f.kizilkaya@etu.edu.tr},
\email{b.ozen@etu.edu.tr}}

\maketitle

\begin{abstract}
We consider a team formation setting where agents have varying levels of expertise in a global set of required skills, and teams are ranked with respect to how well the expertise of teammates complement each other. We model this setting as a hedonic game, and we show that this class of games possesses many desirable properties, some of which are as follows: A partition that is Nash stable, core stable and Pareto optimal is always guaranteed to exist. A contractually individually stable partition (and a Nash stable partition in a restricted setting) can be found in polynomial-time. A core stable partition can be approximated within a factor of $1 - \frac{1}{e}$, and this bound is tight unless $\sf P = NP$. We also introduce a larger and relatively general class of games, which we refer to as monotone submodular hedonic games with common ranking property. We show that the above multi-concept existence guarantee also holds for this larger class of games.

\keywords{team formation \and hedonic games \and approximate core stability \and common ranking property \and submodular utility}
\end{abstract}

\section{Introduction}
\label{Section:Intro}

Hedonic games provide a simple formal model for numerous problems, where a set of agents is required to be partitioned into stable coalitions \cite{Hedonic}, such as research group formation \cite{ResearchGroups}, group activity selection \cite{GroupActivitySelection} or task allocation \cite{Swarm} problems. In this paper, we follow this line of research by introducing a model for formation of stable teams. For the ease of understanding, we define our model below using a simple example where students in a classroom needs to form teams for a project assignment.

In our model, a global set of skills and for each agent a level of expertise in each of these skills are given. For instance, the required skills for a class project assignment may be \texttt{(Python, Java, SQL)} where the expertise of two students, say Alice and Bob, in these skills are \texttt{(1, 3, 3)} and \texttt{(3, 3, 1)} respectively. We measure the success of a team by how well the expertise of teammates complement each other. For instance, notice that Alice may compensate the lack of expertise of Bob in \texttt{SQL}, just as Bob may compensate Alice in \texttt{Python}. We say that a coalition's \textit{joint expertise} in some skill is the maximum level of expertise of its members in that skill, and its \textit{joint utility} is the sum of its joint expertise in each skill. Thus, the team formed by Alice and Bob would have a joint expertise of $3$ in each skill, and thus a joint utility of $9$.

We next define the utility functions of the agents. In our classroom example, even if some students do not contribute to their teams as much as their teammates, they will still receive the same grade as all of their teammates. As this is the case in most situations involving teams, we define the utility of agents simply as the joint utility of their coalition. Note that this means each member of a coalition must have the same utility as others.

In the above setting, notice that all agents are better off in the grand coalition. However, in most real-life scenarios there exists a limit on the sizes of coalitions that can be formed due to inherent constrains and/or coordinational problems. For instance, it would not make sense if the whole class formed a single team, in the classroom example above. Therefore, we additionally have an upper bound on the sizes of coalitions possible to form. The above setting can be modeled as a hedonic game, which we refer to as \textit{hedonic expertise games} (HEGs). HEGs naturally model varieties of team contests such as hackathons in which software developers, graphic designers, project managers, and others (often domain experts) collaborate on software projects.

\medskip

Notice that HEGs are a subclass of \textit{hedonic games with common ranking property} (HGCRP) \cite{Partnerships} since all members of a coalition have the same utility. An HGCRP instance simply consists of a set of agents, and a real-valued joint utility function defined over the set of possible coalitions. Note that the joint utility function of HEGs is additionally monotone and submodular. To the best of our knowledge, HGCRP with monotone and submodular joint utility functions has not yet been studied. We also present existence of equilibrium results for this notable subclass of HGCRP, which we refer to as \textit{monotone submodular HGCRP}, that subsumes HEGs.

Common ranking property guarantees the existence of a core stable partition --- the main stability concept based on group deviations in hedonic games --- which can also be found in polynomial-time \cite{Partnerships}. However, since an HGCRP instance requires exponential space in the number of agents whereas an HEG instance is succinctly representable, finding a core stable partition of an HEG instance is not necessarily polynomial-time solvable. Hence, the computational aspects of HEGs require a complete reconsideration. Note that, however, this is not the case for monotone submodular HGCRP, instances of which require exponential space in the number of agents.

There has been a considerable amount of effort put in the literature to design succinctly representable hedonic games (see, for example \cite{Anonymous,Bogo,HedonicNets}). Among those classes of hedonic games, HEGs are most related to $\mathcal{B}$-hedonic games \cite{Survey}. In $\mathcal{B}$-hedonic games, each agent have preferences over other agents which is  then extended to coalitions based on the most preferred agent in the coalition. HEGs might be thought of as a multidimensional generalization of $\mathcal{B}$-hedonic games where each agent has multiple preferences over other agents for each skill. However, each agent's preferences over other agents is identical in our setting, which is not necessarily the case in $\mathcal{B}$-hedonic games. Hence, $\mathcal{B}$-hedonic games are not a subclass of HEGs. $\mathcal{B}$-hedonic games with strict preferences are known to guarantee the existence of a core stable partition \cite{B-preferences}. We do not, however, need to restrict HEGs to strict preferences for such existence guarantees. We refer to \cite{Handbook} for a detailed discussion of the recent hedonic games literature.

\begin{figure}[h]
\centering
\includegraphics[scale=0.33]{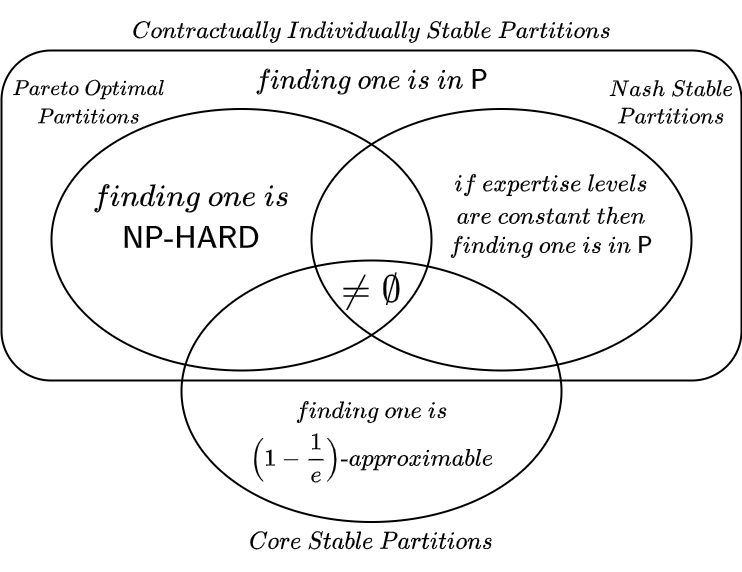}
\caption{The set of core stable, Nash stable, contractually individually stable and Pareto optimal partitions of HEGs are drawn in the above Venn diagram.  The intersection of all of these sets of partitions are guaranteed to be not empty. The computational complexity of finding one such partition in these sets of partitions are stated above.}
\label{Figure:Contributions}
\end{figure}

\noindent \textbf{Contributions and Organization} In Section \ref{Section:Model}, we begin by formally defining HEGs and monotone submodular HGCRP. Then, we define the stability and optimality concepts that we study in the paper.

In Section \ref{Section:Existence}, we prove that a solution that is Nash stable, core stable and Pareto optimal is always guaranteed to exist in monotone submodular HGCRP (and hence in HEGs). The economical interpretation of this existence guarantee is that efficiency need not be sacrificed for the sake of stability with respect to both individual and group based deviations.

In Section \ref{Section:Efficient}, we introduce a decentralized algorithm that finds a Nash stable partition of a given HEG instance. Our procedure terminates in a linear number of moves, if the number of levels of expertise is bounded above by a constant. Note that there exists such a bound for most practical purposes because the expertise in some real-life skill is most commonly measured by a constant number of levels such as \texttt{(0: None, 1: Beginner, 2: Intermediate, 3: Advanced)}. In addition, we show that finding a contractually individually stable partition is polynomial-time solvable in HEGs, even if expertise levels are not bounded.

In Section \ref{Section:Approximation}, we present our main result which is that a $(1 - \frac{1}{e})$-approximate core stable partition of a given HEG instance can be found in polynomial-time and that this is the best approximation ratio achievable, unless $\sf P = NP$. We also show that verifying a core stable partition is intractable, unless $\sf NP = co\text{-}NP$.

In Section \ref{Section:Optimality}, we show that finding a perfect partition, or a socially optimal partition or a Pareto optimal partition is $\sf NP\text{-}HARD$. We also show that verifying a Pareto optimal partition is $\sf coNP$-$\sf COMPLETE$.

In Section \ref{Section:Conclusion}, we present our concluding remarks and discuss some future directions where our contributions might be useful.

\bigskip

The overall picture regarding the multi-concept existence guarantee we give for monotone submodular HGCRP and the related complexity results we give for HEGs is displayed in Figure \ref{Figure:Contributions}.

\section{Model and Background}
\label{Section:Model}

We now formally define \textit{hedonic expertise games} (HEGs). We have a set of \textit{agents} $N$ and a set of \textit{skills} $S$. For each agent $i \in N$, we have a non-negative real-valued \textit{expertise function} $e_i : S \rightarrow \mathbb{R}^+$, where $e_i(s)$ denotes the expertise that agent $i \in N$ has in skill $s \in S$. Lastly, we have an upper bound of $\kappa$ on the sizes of coalitions, i.e., no coalition of size greater than $\kappa$ can be formed. We denote an HEG instance by $\mathcal{G} = (N, S, e, \kappa)$. Moreover, we refer to the subclass of HEGs in which $e_i : S \rightarrow \{0, 1, \ldots, \beta\}$ for all $i \in N$, where $\beta$ is a constant, as $(0, \beta)$-HEGs.

For each coalition $C$, we define the \textit{joint expertise function} $E_C : S \rightarrow \mathbb{R}^+$ where $E_C(s) = \max_{i \in C} e_i(s)$, i.e., $E_C(s)$ is the maximum expertise level that a member of coalition $C$ has in skill $s$. Lastly, we define the \textit{joint utility} of a coalition $C$ as $U(C) = \sum_{s \in S} E_C(s)$, i.e., $U(C)$ is the sum of the maximum expertise that a member of coalition $C$ has in each skill $s \in S$. We define $U(\emptyset) = 0$ as a convention.

A solution of an HEG instance is a partition $\pi$ over the set of agents $N$ where for each coalition  $C \in \pi$ we have $|C| \leq \kappa$. (Throughout the rest of the paper, when we refer to a partition $\pi$, it is implicitly assumed that $|C| \leq \kappa$ for all $C \in \pi$ for the sake of briefness.) We use $\pi(i)$ to denote the coalition containing agent $i \in N$ in partition $\pi$.  We use $u_i(\pi)$ to denote the \textit{utility} of agent $i$ in partition $\pi$ where $u_i(\pi) = U(\pi(i))$, i.e., the utilities of all members of a coalition $C \in \pi$ are the same, and equal to $U(C)$.

The first thing to notice about the above definition is that the joint utility function $U$ is submodular, i.e., for every $X, Y \subseteq N$ with $X \subseteq Y$ and for every $x \in N \setminus Y$, we have that $U(X \cup \{x\}) - U(X) \geq U(Y \cup \{x\}) - U(Y)$. Moreover, the joint utility function $U$ is also monotone, i.e., $U(X) \leq U(Y)$ for all $X \subseteq Y \subseteq N$. We state these properties of the function $U$ in Observation \ref{Observation:Submodular}, the proof of which is given in the appendix.

\begin{observation} \label{Observation:Submodular}
In HEGs, $U$ is a monotone submodular function.
\end{observation}

Recall that HEGs are a subclass of HGCRP, an instance of which is a pair $(N, U)$, where $N$ is a set of agents and $U$ is a joint utility function. Due to Observation \ref{Observation:Submodular}, HEGs are more specifically a subclass of HGCRP with a monotone submodular joint utility function, which have not been studied in the literature to the best of our knowledge. A \textit{monotone submodular HGCRP} instance is a triple $\mathcal{G} = (N, U, \kappa)$, where $N$ is a set of agents, $U$ is a monotone submodular joint utility function, and $\kappa$ is an upper bound on the sizes of coalitions. Notice that, as is the case in HEGs, this game form would be trivial without an upper bound on the size of the coalitions since otherwise all agents would be better off in the grand coalition.

\subsection{Stability \& Optimality}

We now formally define the stability and optimality concepts that we study, in the context of monotone submodular HGCRP.

\medskip

\noindent The main stability concepts based on individual deviations \cite{Bogo} are as follows:

\medskip

\begin{compactitem}
\item A partition $\pi$ is \textit{Nash stable (NS)} if no agent $i \in N$ can benefit from moving to an existing\footnote{Moving to an empty coalition is also permissable, but we can omit this case w.l.o.g. since the joint utility function $U$ is monotone in this setting.} coalition $C \in \pi$ such that $|C| < \kappa$, i.e.,
\begin{compactitem}
\item[--] $U(\pi(i)) \geq U(C \cup \{i\})$
\end{compactitem}
for all $C \in \pi$ such that $|C| < \kappa$.

\medskip

\item A partition $\pi$ is \textit{contractually individually stable (CIS)} if no agent $i \in N$ can benefit from moving to an existing coalition $C \in \pi$ such that $|C| < \kappa$, without making an agent in neither $C$ nor $\pi(i)$ worse off, i.e.,
\begin{compactitem}
\item[--] $U(\pi(i)) \geq U(C \cup \{i\})$ or
\item[--] $U(C) > U(C \cup \{i\})$ or
\item[--] $U(\pi(i)) > U(\pi(i) \setminus \{i\})$
\end{compactitem}
for all $C \in \pi$ such that $|C| < \kappa$.
\end{compactitem}
\medskip

We omitted one other main stability concept based on individual deviations, known as individually stable partitions, since a partition is NS if and only if it is an individually stable partition in monotone submodular HGCRP.

\medskip

\noindent The main stability concept based on group deviations \cite{Core}, and its approximate adaptation is as follows:

\medskip

\begin{compactitem}
\item A coalition $C$ is said to \textit{block} $\pi$, if $U(C) > u_i(\pi)$ for all agents $i \in C$, i.e., any agent $i \in C$ is  better off in $C$ than she is in her coalition $\pi(i)$. A partition $\pi$ is \textit{core stable (CS)} if no coalition blocks $\pi$.

\medskip

\item Similarly, a coalition $C$ is said to \textit{$\alpha$-approximately block} $\pi$ where $\alpha \leq 1$, if $\alpha \cdot U(C) > u_i(\pi)$ for all agents $i \in C$. Similarly, a partition $\pi$ is \textit{$\alpha$-approximate CS} if no coalition $\alpha$-approximately blocks $\pi$. Note that a $1$-approximate CS partition is simply a CS partition.
\end{compactitem}

\medskip

We now define the following notation which comes in handy with the above stability concepts. For a partition $\pi$ and a coalition $C \notin \pi$, we define $\pi_C$ as the partition induced on $\pi$ by $C$, i.e., $\pi_C$ is the partition that would arise if the agents in $C$ collectively deviated from $\pi$ to form coalition $C$, i.e., $\pi_C(i) = C$ for all $i \in C$, and $\pi_C(j) = \pi(j) \setminus C$ for all $j \in N \setminus C$. Notice that if a partition $\pi$ is not NS then there exists an agent $i \in N$ and a coalition $C \in \pi$ such that $u_i(\pi_{C \cup \{i\}}) > u_i(\pi)$ and $|C| < \kappa$. Also notice that if coalition $C$ blocks partition $\pi$ then $u_i(\pi_C) > u_i(\pi)$, for all agents $i \in C$.

\medskip

\noindent The main optimality concepts are as follows:

\medskip

\begin{compactitem}
\item A partition $\pi$ is \textit{perfect} if all agents are in their most preferred coalition.

\medskip

\item The \textit{social welfare} $W(\pi)$ of a partition $\pi$ is defined as the sum of the utilities of all the agents, i.e., $W(\pi) = \sum_{i \in N} u_i(\pi)$. A \textit{socially optimal (SO)} solution is a partition for which the social welfare is maximized.

\medskip

\item A partition $\pi'$ \textit{Pareto dominates} a partition $\pi$ if $u_i(\pi') \ge u_i(\pi)$ for all agents $i \in N$, and there exists some agent $i$ for which the inequality is strict. A partition $\pi$ is said to be \textit{Pareto optimal (PO)} if no partition $\pi'$ Pareto dominates $\pi$.
\end{compactitem}

\medskip

Observe that a perfect partition is necessarily socially optimal, and a socially optimal partition is necessarily Pareto optimal. Though a socially optimal partition (and thus a Pareto optimal partition) is guaranteed to exist in hedonic games, a perfect partition does not necessarily exist in HEGs (and thus in monotone submodular HGCRP).

\section{Existence Guarantees}
\label{Section:Existence}

Common ranking property have been long known for guaranteeing the existence of a CS partition in hedonic games via a simple greedy algorithm \cite{Partnerships}. Moreover, existence of a partition which is both CS and PO in HGCRP is recently proven by presenting an asymmetric and transitive relation $\psi$ defined over the set of partitions, where a maximal partition with respect to $\psi$ is  both CS and PO \cite{HGCRP}. This is established by applying the potential function argument twice as follows.

Given a partition $\pi$, $\psi(\pi)$ is defined as the sequence of the utilities of all the agents in a \textit{non-increasing order}. Then, it is shown that \textbf{(CS)} and \textbf{(PO)} given below hold for any partition $\pi$, where we use $\rhd$ to denote ``lexicographically greater than''. (We will be also using $\unrhd$ to denote ``lexicographically greater than or equal to''.)

\medskip

\begin{compactitem}
\item[\textbf{(CS)}] If there exists a coalition $C$ that blocks $\pi$ then $\psi(\pi_C) \rhd \psi(\pi)$.
\item[\textbf{(PO)}] If there exists a partition $\pi'$ that Pareto dominates $\pi$ then $\psi(\pi') \rhd \psi(\pi)$.
\end{compactitem}

\medskip

On the other hand, an HGCRP instance does not necessarily possess a NS partition as can be seen from Example \ref{Example:HGCRP}.

\begin{example} \label{Example:HGCRP}
Consider the HGCRP instance $\mathcal{G} = (N, U)$, where $N = \{1, 2\}$, and $U$ is defined as $U(\{1\}) = 1$, $U(\{1, 2\}) = 2$ and $U(\{2\}) = 3$. Notice that HGCRP instance $\mathcal{G}$ does not possess a NS partition.
\end{example}

In this section, we show that monotone submodular HGCRP do not only admit a NS partition but also a NS, CS and PO partition; which improves upon the aforementioned existence guarantee in HGCRP.

\begin{theorem} \label{Theorem:Multiconcept}
In monotone submodular HGCRP, a partition that is NS, CS and PO is always guaranteed to exist.
\end{theorem}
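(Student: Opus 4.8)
The plan is to reuse the potential relation $\psi$ already at hand, which orders partitions lexicographically by their non-increasing sequences of agent utilities and for which properties \textbf{(CS)} and \textbf{(PO)} are known to hold in all of HGCRP. Since there are only finitely many partitions, a lexicographically maximal partition $\pi^*$ with respect to $\psi$ exists, and \textbf{(CS)} together with \textbf{(PO)} immediately certify that $\pi^*$ is CS and PO (any blocking coalition or Pareto improvement would yield a $\psi$-strictly-larger partition, contradicting maximality). The entire task therefore reduces to proving that this same $\pi^*$ is NS, which I would obtain by establishing the analogous property: \textbf{(NS)} if some agent $i$ can beneficially move from $\pi(i)$ to an existing coalition $C \in \pi$ with $|C| < \kappa$, then $\psi(\pi_{C \cup \{i\}}) \rhd \psi(\pi)$.

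To prove \textbf{(NS)} I would track how the multiset of utilities changes under the deviation. Write $a = U(\pi(i))$, $b = U(C)$, $c = U(C \cup \{i\})$, and $d = U(\pi(i) \setminus \{i\})$. By the common ranking property, before the move every agent of $\pi(i)$ has utility $a$ and every agent of $C$ has utility $b$; after the move the $|C|+1$ agents of $C \cup \{i\}$ all have utility $c$, the $|\pi(i)|-1$ agents abandoned in $\pi(i)\setminus\{i\}$ all have utility $d$, and all remaining agents are unaffected. The beneficial-move hypothesis gives $c > a$, while monotonicity of $U$ gives $b \le c$ and $d \le a$.

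The main obstacle is exactly the drop $a \to d$ suffered by the teammates that $i$ leaves behind: a single improving move is not a Pareto improvement, so it is not obvious that $\psi$ rises. I would resolve this by exploiting that $\psi$ compares the sorted sequences lexicographically, so only the highest coordinate at which they differ matters. No affected agent had utility exceeding $c$ before the move (their values were $a < c$ or $b \le c$), and none exceeds $c$ afterward either, so the coordinates holding values strictly above $c$ are identical (they come only from unaffected agents). At the value $c$ itself, the number of affected agents equal to $c$ is at most $|C|$ before the move but exactly $|C|+1$ after it; hence, scanning from the top, the first coordinate at which the two sorted sequences differ is one where the new sequence holds $c$ while the old holds something strictly smaller. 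This yields $\psi(\pi_{C \cup \{i\}}) \rhd \psi(\pi)$, as desired. I would note that this argument uses only monotonicity, which is consistent with the non-monotone Example~\ref{Example:HGCRP} admitting no NS partition.

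Finally I would combine the three properties at $\pi^*$: if $\pi^*$ admitted a blocking coalition, a Pareto improvement, or a beneficial unilateral move, then \textbf{(CS)}, \textbf{(PO)}, or \textbf{(NS)} would produce a partition strictly above $\pi^*$ in the $\psi$-order, contradicting its maximality. Therefore $\pi^*$ is simultaneously CS, PO, and NS, which establishes the theorem. The only step requiring genuine new work is \textbf{(NS)}; the remaining reasoning is the standard ``maximal element of a potential'' argument.
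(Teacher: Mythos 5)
Your proposal is correct, and its skeleton is the same as the paper's: the paper also observes that, given \textbf{(CS)} and \textbf{(PO)} from the HGCRP literature, all that is missing is the potential property you call \textbf{(NS)} (this is exactly the paper's Lemma~\ref{Lemma:NashStable}), and then takes a $\psi$-maximal partition. Where you genuinely differ is in how that key lemma is proved. The paper fixes an agent ordering in which $\psi(\pi)$ is non-increasing, coalition members are consecutive, and the deviator $i$ heads the block of $\pi(i)$; it then compares $\psi(\pi)$ with the \emph{unsorted} utility sequence $\psi'(\pi_{C \cup \{i\}})$ of the new partition under that same ordering, rules out $\psi(\pi) \rhd \psi'(\pi_{C \cup \{i\}})$ by a first-index-of-disagreement contradiction (that index would have to lie in $\pi(i) \setminus \{i\}$, yet agent $i$ disagrees earlier), and finally invokes the fact that sorting in non-increasing order can only increase a sequence lexicographically, giving $\psi(\pi_{C \cup \{i\}}) \unrhd \psi'(\pi_{C \cup \{i\}}) \rhd \psi(\pi)$. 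You instead compare the two \emph{sorted} sequences directly by a multiset count: the values strictly above $c = U(C \cup \{i\})$ are contributed only by unaffected agents (using $a < c$, $b \le c$, $d \le a$), so that portion of both sequences is identical, while the multiplicity of the value $c$ strictly increases (from at most $|C|$ affected agents to exactly $|C|+1$), so the first disagreement from the top favors the new partition. Your route avoids the paper's w.l.o.g.\ reordering and the auxiliary unsorted sequence, and is arguably cleaner; both arguments use only monotonicity and never submodularity, and your closing remark that this is consistent with the non-monotone Example~\ref{Example:HGCRP} admitting no NS partition is exactly the right sanity check.
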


In order to prove Theorem \ref{Theorem:Multiconcept}, notice that we only need to include a third potential function argument which establishes that a maximal partition with respect to $\psi$ is also NS, due to \textbf{(CS)} and \textbf{(PO)}.

\begin{lemma} \label{Lemma:NashStable}
Given a partition $\pi$ in monotone submodular HGCRP, if there exists an agent $i$ which benefits from moving to an existing coalition $C \in \pi$ such that $|C| < \kappa$ (i.e., $u_i(\pi_{C \cup \{i\}}) > u_i(\pi)$), then $\psi(\pi_{C \cup \{i\}}) \rhd \psi(\pi)$.
\end{lemma}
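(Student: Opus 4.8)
The plan is to track exactly how the sorted utility profile changes when agent $i$ relocates, and to show that this change is a strict lexicographic improvement ``pinned'' at the value $v^* := U(C \cup \{i\})$. First I would fix notation: write $A = \pi(i)$ for $i$'s current coalition and set $\pi' := \pi_{C \cup \{i\}}$. Comparing $\pi$ and $\pi'$, the only agents whose utility changes are those in $(C \cup \{i\}) \cup A$: every agent in $C \cup \{i\}$ ends up with utility $v^*$, the agents in $A \setminus \{i\}$ drop to utility $U(A \setminus \{i\})$, and all remaining agents are untouched (their blocks are disjoint from $C \cup \{i\}$). Monotonicity of $U$ then supplies the three inequalities that drive everything: $v^* > U(A) = u_i(\pi)$ (the benefit hypothesis), $U(C \cup \{i\}) \geq U(C)$, and $U(A \setminus \{i\}) \leq U(A) < v^*$.

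These inequalities say that in $\pi'$ every changed utility is either equal to $v^*$ (for $C \cup \{i\}$) or strictly below $v^*$ (for $A \setminus \{i\}$), while in $\pi$ every changed utility was already $\leq v^*$ --- strictly below for $i$ and for $A \setminus \{i\}$, and at most $v^*$ for the agents of $C$. The next step is to compare the two sorted sequences through their threshold counts $n_{\geq t}(\cdot) := |\{\, j \in N : u_j(\cdot) \geq t \,\}|$. For every $t > v^*$ the set of agents meeting the threshold is literally the same in $\pi$ and $\pi'$, since no changed agent reaches above $v^*$ in either partition; hence the two sorted profiles share an identical prefix consisting of all entries that exceed $v^*$. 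At the threshold $t = v^*$ itself, agent $i$ alone already forces a strict gain --- she is counted in $\pi'$ but not in $\pi$ --- and no agent is ever lost, so $n_{\geq v^*}(\pi') \geq n_{\geq v^*}(\pi) + 1$.

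To finish I would convert these two facts into the lexicographic statement directly. Let $m = n_{> v^*}(\pi) = n_{> v^*}(\pi')$ and $p = n_{\geq v^*}(\pi)$, so $p \geq m$. The first $m$ positions of both sorted profiles coincide (the common set of entries exceeding $v^*$); positions $m+1, \dots, p$ equal $v^*$ in both; and since $n_{\geq v^*}(\pi') \geq p + 1$, position $p+1$ equals $v^*$ in $\pi'$ but is strictly below $v^*$ in $\pi$. Thus the profiles first differ at index $p+1$, where $\pi'$ is strictly larger, which is exactly $\psi(\pi') \rhd \psi(\pi)$.

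The subtlety to handle with care --- and the only place the argument could derail --- is the agents of $A \setminus \{i\}$ whose utility strictly decreases: a crude ``total welfare increases'' argument would fail, because this single deviation can be globally lossy. The resolution is that all such losses land strictly below $v^*$, i.e.\ strictly after the pivot index $p+1$ at which $\pi'$ has already won, so they are invisible to the lexicographic comparison; the degenerate case $A = \{i\}$ (no decreasing agents) only makes this easier. I would also remark that this lemma uses monotonicity of $U$ alone, with submodularity playing no role here.
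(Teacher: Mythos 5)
Your proof is correct, but it follows a genuinely different route from the paper's. The paper fixes an ordering of the agents aligned with $\psi(\pi)$ (coalition-mates consecutive, with $i$ placed before the rest of $\pi(i)$), forms the \emph{unsorted} utility profile $\psi'(\pi_{C \cup \{i\}})$ of the new partition under that same agent ordering, shows $\psi'(\pi_{C \cup \{i\}}) \rhd \psi(\pi)$ by a first-point-of-difference contradiction (any index where $\pi$ wins must belong to $\pi(i)\setminus\{i\}$, but $i$'s strictly improved entry comes earlier), and then concludes via the fact that re-sorting in non-increasing order can only lexicographically increase a sequence, i.e.\ $\psi(\pi_{C \cup \{i\}}) \unrhd \psi'(\pi_{C \cup \{i\}})$, plus transitivity. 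You instead compare the two \emph{sorted} profiles directly through threshold counts $n_{\geq t}$: all entries above $v^* = U(C \cup \{i\})$ come from untouched coalitions and form an identical prefix, the count at level $v^*$ strictly increases (agent $i$ is gained, nobody is lost, since $U(C) \leq v^*$ by monotonicity), so the first disagreement occurs at the pivot position $p+1$, where $\psi(\pi_{C\cup\{i\}})$ holds $v^*$ and $\psi(\pi)$ holds something smaller. Both arguments rest on the same two facts --- monotonicity guarantees the only losers are $\pi(i)\setminus\{i\}$, and their losses land strictly below $i$'s gain --- and neither needs submodularity, as you correctly remark (the same is true of the paper's proof). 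What each buys: the paper's argument reuses the permutation-alignment technique behind the \textbf{(CS)} and \textbf{(PO)} potential arguments it cites, at the cost of invoking (without proof) the lemma that the non-increasing sort is lexicographically maximal among permutations; your argument is self-contained, avoids that auxiliary lemma and the contradiction setup entirely, and additionally pinpoints the exact index at which the lexicographic improvement occurs, which makes the ``losses are invisible because they occur after the pivot'' intuition explicit.
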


\begin{proof}
Suppose that a monotone submodular HGCRP instance $\mathcal{G} = (N, U, \kappa)$ is given along with some partition $\pi$. We denote the $i^{th}$ element in $\psi(\pi)$ by $\psi_i(\pi)$. We rename agents such that $\psi_i(\pi) = u_i(\pi)$, i.e., $u_1(\pi) \geq u_2(\pi) \geq \ldots \geq u_{|N|}(\pi)$. Moreover, we assume w.l.o.g. that the utilities of agents that are in the same coalition in $\pi$ are listed consecutively in $\psi(\pi)$.

Suppose that there exists an agent $i \in N$ which benefits from moving to an existing coalition $C \in \pi$ such that $|C| < \kappa$, i.e., $U(C \cup \{i\}) \geq U(\pi(i))$. Furthermore, assume w.l.o.g. that the utility of agent $i$ precedes the utility of other agents in $\pi(i)$ in the ordering of $\psi(\pi)$.

Let $\psi'(\pi_{C \cup \{i\}})$ be a permutation of $\psi(\pi_{C \cup \{i\}})$ such that $\psi'_j(\pi_{C \cup \{i\}}) = u_j(\pi_{C \cup \{i\}})$ for all agents $j \in N$. Note that $\psi'_j(\pi_{C \cup \{i\}})$ and $\psi_j(\pi)$ are the respective utilities of the same agent $j$. Moreover, note that since  $\psi(\pi_{C \cup \{i\}})$ is the same sequence as $\psi'(\pi_{C \cup \{i\}})$ but sorted in non-increasing order, we have $\psi(\pi_{C \cup \{i\}}) \unrhd \psi'(\pi_{C \cup \{i\}})$.

We know that $\psi'(\pi_{C \cup \{i\}}) \neq \psi(\pi)$ since their $i^{th}$ elements differ. This means that either $\psi'(\pi_{C \cup \{i\}}) \rhd \psi(\pi)$ or $\psi(\pi) \rhd \psi'(\pi_{C \cup \{i\}})$. Assume for the sake of contradiction that $\psi(\pi) \rhd \psi'(\pi_{C \cup \{i\}})$. Let $j$ be the agent with the smallest index such that $\psi_j(\pi) >  \psi'_j(\pi_{C \cup \{i\}})$, i.e., $u_j(\pi) > u_j(\pi_{C \cup \{i\}})$. Note that there must exists such $j \in N$. This implies that $j \in \pi(i)$ since only coalition whose joint utility may decrease is $\pi(i)$ after agent $i$ moves to coalition $C$ in partition $\pi$. However, recall that $u_i(\pi)$ precedes $u_j(\pi)$ in the ordering of $\psi(\pi)$, i.e., there exists an agent $i$ where $i < j$ such that $\psi_i(\pi) <  \psi'_i(\pi_{C \cup \{i\}})$, which is contradictory. Therefore, we have $\psi'(\pi_{C \cup \{i\}}) \rhd \psi(\pi)$.

Recall that $\psi(\pi_{C \cup \{i\}}) \unrhd \psi'(\pi_{C \cup \{i\}})$. Then, $\psi(\pi_{C \cup \{i\}}) \unrhd \psi'(\pi_{C \cup \{i\}}) \rhd \psi(\pi)$. Since lexicographic order is transitive, this means that $\psi(\pi_{C \cup \{i\}}) \rhd \psi(\pi)$, which completes our proof. \qed
\end{proof}

Since HEGs is a subclass of monotone submodular HGCRP, a partition that is NS, CS and PO is also guaranteed to exist in HEGs by Theorem \ref{Theorem:Multiconcept}.

\begin{corollary} \label{Theorem:Multiconcept}
In HEGs, a partition that is NS, CS and PO is always guaranteed to exist.
\end{corollary}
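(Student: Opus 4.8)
The plan is to obtain the statement as an immediate specialization of Theorem~\ref{Theorem:Multiconcept}, the general existence result for monotone submodular HGCRP. The key point is that every HEG is, up to a trivial re-encoding, a monotone submodular HGCRP instance, and that the three solution concepts (NS, CS, and PO) are defined purely in terms of the joint utility function $U$ and the coalition-size bound $\kappa$, both of which are shared by the two models.

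First, I would fix an arbitrary HEG instance $\mathcal{G} = (N, S, e, \kappa)$ and associate to it the triple $(N, U, \kappa)$, where $U(C) = \sum_{s \in S} \max_{i \in C} e_i(s)$ is the joint utility function of $\mathcal{G}$. By Observation~\ref{Observation:Submodular}, this $U$ is monotone and submodular, so $(N, U, \kappa)$ is a legitimate monotone submodular HGCRP instance. This class-membership step is the only thing that genuinely requires verification, and it is exactly the content of Observation~\ref{Observation:Submodular}.

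Second, I would observe that the notions of a feasible partition and of NS, CS, and PO partitions, as defined in Section~\ref{Section:Model}, reference only $U$, $\kappa$, and the induced agent utilities $u_i(\pi) = U(\pi(i))$; they make no further use of the skill-and-expertise representation. Hence a partition $\pi$ of $N$ is NS (resp.\ CS, PO) in the HEG $\mathcal{G}$ if and only if it is NS (resp.\ CS, PO) in the associated instance $(N, U, \kappa)$.

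Finally, applying Theorem~\ref{Theorem:Multiconcept} to $(N, U, \kappa)$ yields a partition that is simultaneously NS, CS, and PO in the HGCRP instance, which by the preceding equivalence is also NS, CS, and PO in $\mathcal{G}$. Since $\mathcal{G}$ was arbitrary, the claim follows. I do not expect a substantive obstacle here: everything beyond the (already established) submodularity and monotonicity of $U$ is the verbatim transfer of definitions between the two models.
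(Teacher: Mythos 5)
Your proposal is correct and matches the paper's own argument, which likewise derives the corollary by noting that HEGs are a subclass of monotone submodular HGCRP (via Observation~\ref{Observation:Submodular}) and then invoking Theorem~\ref{Theorem:Multiconcept}. The only difference is that you spell out the definitional transfer of NS, CS, and PO between the two models, which the paper leaves implicit.
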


\section{Efficiently Computable Stable Partitions}
\label{Section:Efficient}

Recall that since a monotone submodular HGCRP instance requires exponential space in the number of agents, finding a solution in polynomial-time does not necessarily indicate an efficient algorithm in the practical sense. Therefore, we focus on HEGs with regard to computational aspects.

We first present a decentralized algorithm for finding a NS partition of a given $(0,\beta)$-HEG instance. Our algorithm accompanies a restricted version of \textit{better response dynamics}, i.e., while the current partition $\pi$ is not NS, an agent $i$ moves to an existing coalition $C \in \pi$ such that $|C| < \kappa$ and $u_i(\pi_{C \cup \{i\}}) > u_i(\pi)$, which we refer to as a \textit{better response} of agent $i$ in partition $\pi$.

Due to Lemma \ref{Lemma:NashStable}, we know that better response dynamics is guaranteed to converge to a NS partition in HEGs. However, it is not clear how many moves this would require in the worst-case. For the ease of analysis,  we force a natural restriction on better response dynamics, under which we show that a linear number of moves suffice to find a NS partition. We refer to this restricted class as \textit{imitative better response dynamics}, which we describe below.

\medskip

\begin{compactitem}
\item Given a partition $\pi$, suppose that an agent $i$ benefits from moving to an existing coalition $C \in \pi$ such that $|C| < \kappa$, i.e., $U(C \cup \{i\}) > U(\pi(i))$.

\item Suppose that agent $i$ takes the above better response. If $|C \cup \{i\}| < \kappa$, then notice that another agent $i' \in \pi(i) \setminus \{i\}$ also benefits from moving to $C \cup \{i\}$, since $U(C \cup \{i\}) > U(\pi(i)) \geq U(\pi(i) \setminus \{i\})$.

\item That is, if the size of the coalition that the last agent $i$ has moved to did not reach the upper bound of $\kappa$, then an agent $i' \in \pi(i) \setminus \{i\}$ simply \textit{``imitates''} the last agent $i$ by moving to the same coalition. Otherwise, an arbitrary agent takes a better response.
\end{compactitem}

\medskip

We now show that above procedure starting from any partition $\pi$ where each coalition $C \in \pi$ has a size of exactly $\kappa$ (except maybe for one coalition) converge to a NS partition in $O(|N| \cdot |S|)$ moves, in $(0,\beta)$-HEGs. Notice that this is significant since we can say that even boundedly-rational agents that imitate the last agent if possible, and take the first beneficial move otherwise, will converge to a NS partition quickly.

\begin{theorem} \label{Theorem:Poly:NS}
In $(0,\beta)$-HEGs, a NS partition can be found in polynomial-time via imitative better response dynamics in $O(|N| \cdot |S|)$ moves.
\end{theorem}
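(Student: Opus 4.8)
The plan is to exploit the highly structured way in which imitative better response dynamics proceeds, and to organize the moves into \emph{rounds} so that the number of rounds and the number of moves per round can each be controlled separately. The starting observation is an invariant that I would prove by induction: at the beginning of every round the current partition $\pi$ has exactly one \emph{open} coalition, of some size $c$ with $1 \le c < \kappa$, while every other coalition has size exactly $\kappa$, and moreover $c$ never changes. This matches the hypothesis on the initial partition. The key point is that when all non-open coalitions are full, the only coalition an agent may beneficially move into is the open one; hence if no such move exists the partition is already NS, and otherwise a round begins.

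Next I would analyze a single round. Some agent $a$ in a full coalition $D$ moves into the open coalition $O$, producing a target $O \cup \{a\}$ with $U(O \cup \{a\}) > U(D)$. As long as the target has not reached size $\kappa$, the imitation rule forces an agent of $D \setminus \{a\}$ to follow into the same target, and (as already argued in the description of imitative better response dynamics) every such move is beneficial, because the target's utility is nondecreasing and stays strictly above the nonincreasing utility of the shrinking source. Tracking sizes, the target grows $c \to c+1 \to \cdots \to \kappa$ while $D$ shrinks $\kappa \to \kappa-1 \to \cdots \to c$; since $c \ge 1$ the source is never emptied, so the round consists of \emph{exactly} $\kappa - c$ moves, the number of coalitions is preserved, and at the end the old source $D$ is the new open coalition of size $c$ while the old open coalition has become full. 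This both establishes the invariant and fixes the per-round move count at $\kappa - c$.

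To bound the number of rounds, I would use the potential $\Phi(\pi) = \sum_{C \in \pi,\, |C| = \kappa} U(C)$, the sum of the joint utilities of the full coalitions. In $(0,\beta)$-HEGs every $U(C)$ is an integer in $\{0, \dots, \beta|S|\}$, and by the invariant there are always exactly $m-1$ full coalitions, where $m = |\pi|$ is constant; hence $\Phi$ is an integer with $0 \le \Phi \le (m-1)\beta|S|$. A round replaces the full coalition $D$ of utility $U(D)$ in this sum by the newly filled target of utility $V \ge U(O \cup \{a\}) > U(D)$, leaving all other full coalitions untouched; since these quantities are integers, $\Phi$ strictly increases by at least $1$ each round. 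Therefore the number of rounds is at most $(m-1)\beta|S|$.

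Finally I would multiply the two bounds. The total number of moves is at most $(m-1)\beta|S| \cdot (\kappa - c)$, and since $|N| = (m-1)\kappa + c$ we have $(m-1)(\kappa - c) = |N| - mc \le |N|$. Consequently the dynamics performs at most $|N| \cdot \beta|S| = O(|N| \cdot |S|)$ moves, with $\beta$ constant; as each move is clearly computable in polynomial time, this yields the claimed polynomial-time bound. I expect the main obstacle to be pinning down the round invariant rigorously---in particular verifying that the source coalition is never emptied, so that both $m$ and $c$ stay fixed, and that the role swap between the open and source coalitions is exactly as described---since the clean product bound $(m-1)(\kappa - c) \le |N|$ depends entirely on $c$ remaining constant across rounds.
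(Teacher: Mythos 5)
Your proposal is correct and follows essentially the same argument as the paper: the same initial partition structure, the same decomposition into rounds of exactly $\kappa - c$ imitation moves that swap the roles of the source and open coalitions, and the same integrality-based potential argument yielding the bound $\beta|S|\cdot(m-1)(\kappa-c) = O(|N|\cdot|S|)$. The only cosmetic difference is that you aggregate the potential into a single sum $\Phi$ over full coalitions, whereas the paper tracks the (relabeled) utility of each full ``slot'' individually---these are the same argument in substance.
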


\begin{proof}
Suppose that a $(0,\beta)$-HEG instance $\mathcal{G} = (N, S, e, \kappa)$ is given. We begin by constructing a random partition $\pi$ where each coalition $C \in \pi$ has a size of exactly $\kappa$, except maybe for one coalition. We refer to the coalitions in $\pi$ whose size is exactly $\kappa$ as $C_1, \ldots, C_{\floor{\sfrac{|N|}{\kappa}}}$. Notice that if these are the only coalitions in $\pi$ then we are done since no agent can move to another coalition in $\pi$. Hence, we assume w.l.o.g. that there is another coalition $L \in \pi$ such that $|L| < \kappa$ which consists of the leftover agents from those coalitions whose sizes are exactly $\kappa$. We now consider imitative better response dynamics starting from partition $\pi$.

Suppose that $\pi$ is not NS. Then, there exists an agent $j \in C_i$ which benefits from moving to coalition $L$, i.e., $U(L \cup \{j\}) > U(C_i)$. Suppose that agent $j$ takes this better response. Note that then $\kappa - |L| - 1$ agents in $C_i \setminus \{j\}$ will imitate agent $j$ by moving to the same coalition. Notice that after these $\kappa - |L|$ moves, the resulting partition will still consist of $\floor{\sfrac{|N|}{\kappa}}$ coalitions whose sizes are exactly $\kappa$, and an additional coalition that consists of the leftover agents. We exploit this structure as follows:

\medskip

\begin{compactitem}
\item Let $C'_i$ denote the resulting coalition after agent $j$ and other $\kappa - |L| - 1$ agents in $C_i \setminus \{j\}$ move to coalition $L$, i.e., $C'_i = L \cup \{j\} \cup K$ where $K \subseteq C_i \setminus \{j\}$ is an arbitrary subset of agents of size $\kappa - |L| - 1$.

\medskip

\item Let $L'$ denote the remaining coalition after agent $j$ and other $\kappa - |L| - 1$ agents in $C_i \setminus \{j\}$ move to coalition $L$, i.e., $L' = C_i \setminus (K \cup \{j\})$.
\end{compactitem}

\medskip

Notice that we can obtain the resulting partition, say $\pi'$, after these $\kappa - |L|$ moves by updating $C_i$ as $C'_i$ and $L$ as $L'$ in partition $\pi$. Moreover, notice that $U(C'_i) = U(L \cup \{j\} \cup K) \geq U(L \cup \{j\}) > U(C_i)$, which means the joint utility of the coalition which we refer to as $C_i$ is strictly greater in $\pi'$ than in $\pi$. Since $U(C_i)$ is an integer between $0$ and $\beta \cdot |S|$, this means the number of moves is bounded by $\beta \cdot |S| \cdot \floor{\sfrac{|N|}{\kappa}} \cdot (\kappa - |L|) = O(|N| \cdot |S|)$, which finishes our proof. \qed
\end{proof}

We next present a polynomial-time algorithm for finding a CIS partition in HEGs, not only in $(0, \beta)$-HEGs unlike our previous result.

\begin{theorem} \label{Theorem:Poly:CIS}
In HEGs, a CIS partition can be found in polynomial-time.
\end{theorem}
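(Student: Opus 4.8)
The plan is to give a \emph{direct, one-shot} construction rather than to iterate improving moves. This is the right strategy here because iterating CIS-improving deviations only yields a pseudo-polynomial guarantee of the same flavor as Theorem \ref{Theorem:Poly:NS} (the social welfare increases each step, but by real, not integer, amounts), which is exactly why boundedness of the expertise levels would be needed; a construction that never iterates sidesteps this entirely. First I would simplify what CIS demands in an HEG using monotonicity of $U$ (Observation \ref{Observation:Submodular}). Since $U$ is monotone, the clause $U(C) > U(C \cup \{i\})$ can never hold, and $U(\pi(i)) \ge U(\pi(i) \setminus \{i\})$ always holds, so the third clause $U(\pi(i)) > U(\pi(i)\setminus\{i\})$ reduces to saying that $i$ is \emph{essential} in its own coalition. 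Hence $\pi$ is CIS iff for every agent $i$ and every $C \in \pi$ with $|C| < \kappa$, either $i$ cannot strictly improve ($U(\pi(i)) \ge U(C \cup \{i\})$) or $i$ is essential in $\pi(i)$ (a condition independent of the target $C$). It therefore suffices to build a partition in which every agent is either essential where it sits or has no profitable target.

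The key structural lever is that an agent can only move into a coalition of size strictly below $\kappa$. If every coalition had size exactly $\kappa$, the CIS condition would be vacuously satisfied; so the entire difficulty is concentrated in the leftover agents when $\kappa \nmid |N|$, together with \emph{degenerate} agents, by which I mean those with $e_i(s)=0$ for all $s$ (and, more generally, agents dominated inside their coalition). I would pack the agents into coalitions of size exactly $\kappa$, and dispose of the at most $\kappa-1$ leftover agents by padding them up to a full coalition using degenerate agents whenever they are available: appending a zero-expertise agent to any coalition changes neither its joint utility nor the essentiality status of the other members, so this padding is harmless. The aim is to reach a state with at most one non-full coalition $L$; for members of the full coalitions I would then certify CIS through the two guards above, and agents inside $L$ have no other non-full coalition to move into.

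The step I expect to be the main obstacle is making this single surviving non-full coalition $L$ safe in polynomial time. I cannot simply take $L$ to be the highest-utility small coalition, because computing that is precisely the maximum-coverage-type problem shown to be NP-hard in Section \ref{Section:Approximation}; the argument must instead rely only on the essentiality guard, certifying that every agent who would benefit from joining $L$ is irreplaceable in its own coalition. The subtlety is that an all-essential coalition has at most $|S|$ members (each essential agent is the strict top in a distinct skill), so when $\kappa > |S|$ the full coalitions unavoidably contain non-essential members. For such members essentiality gives no protection, and I must instead guarantee $U(L \cup \{i\}) \le U(\pi(i))$ directly -- for instance by drawing $L$ from the weakest agents while packing each full coalition to have joint utility dominating every augmentation of $L$. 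Establishing this inequality for all relevant agents simultaneously, in polynomial time and with no iterative improvement, is the part I anticipate will require the most care.
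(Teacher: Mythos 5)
Your reduction of the CIS condition is correct and matches the paper's: by monotonicity the guard $U(C) > U(C \cup \{i\})$ never fires, and the guard $U(\pi(i)) > U(\pi(i)\setminus\{i\})$ is exactly the paper's notion of a \emph{critical} agent (one who is the strict maximum of its coalition in some skill). Your structural setup --- coalitions of size exactly $\kappa$ plus a single leftover coalition $L$, so that $L$ is the only possible target of any deviation --- is also the paper's starting point. But your argument stops exactly where the real work begins: you never exhibit a polynomial-time way to make $L$ safe, and you yourself flag establishing $U(L \cup \{i\}) \le U(\pi(i))$ simultaneously for all non-critical agents $i$ as the unresolved step. The sketch offered in its place (``draw $L$ from the weakest agents'') comes with no argument that such a configuration exists or can be computed; indeed it flirts with exactly the maximum-coverage-type optimization you correctly note is $\sf NP$-$\sf HARD$. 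So there is a genuine gap: the central construction is missing.

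The premise that pushed you toward a one-shot construction is also mistaken, and this is where the missing idea lies. Iteration here is not doomed to pseudo-polynomiality, because the progress measure need not be the (real-valued) joint utility. The paper iterates \emph{swaps}: if the current partition is not CIS, there is a non-critical agent $j \in C_i$ with $U(L \cup \{j\}) > U(C_i)$; this forces some skill $s$ with $E_L(s) > E_{C_i}(s)$, hence $L$ contains an agent $j'$ who is critical for $C_i$. Exchange $j$ and $j'$, keeping all coalition sizes intact. The termination argument is combinatorial rather than potential-based: since $j$ was not critical, $E_{C_i}(s) = E_{C_i \setminus \{j\}}(s)$ for every $s$, so the swap makes the expertise profile of that coalition weakly larger pointwise; consequently $j$ can never later become critical for it, and since the only way to re-enter a full coalition is as the incoming critical agent, $j$ never returns. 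Each agent can thus leave each of the $\lfloor |N|/\kappa \rfloor$ full coalitions at most once, bounding the number of iterations by $\lfloor |N|/\kappa \rfloor \cdot |N|$ independently of the expertise values --- fully polynomial, with no boundedness assumption on $e_i$. This ``no-return'' swap argument is precisely what your proposal lacks, and without it (or a completed one-shot construction) the statement is not proved.
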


\begin{proof} Suppose that an HEG instance $\mathcal{G} = (N, S, e, \kappa)$ is given. Exactly as in Theorem \ref{Theorem:Poly:NS}, we begin with a partition $\pi = (C_1, \ldots, C_{\floor{\sfrac{|N|}{\kappa}}}, L)$ where $|C_i| = \kappa$ for all $C_i \in \pi$ and $|L| < \kappa$. Note that if $L = \emptyset$ (i.e. $\kappa$ divides $|N|$) then we are already done. Hence, we assume w.l.o.g. that $L \neq \emptyset$.

Recognizing which agents of a coalition are (or would be) ``critical'' lies in the heart of our proof.  We say that an agent $i$ is \textit{critical} for a coalition $C$ if there exists a skill $s \in S$ such that $e_i(s) > E_{C \setminus \{i\}}(s)$. Note that an agent $i \in C$ is critical for coalition $C$ if and only if  $U(C) > U(C \setminus \{i\})$. This means that if each agent $i \in C$ is critical for $C$, then no agent in $C$ can leave coalition $C$ without making an agent in $C$ worse off.

\medskip

Notice that $\pi$ is not CIS if and only if there exists an agent $j \in C_i$ such that\footnote{And also $U(L \cup \{j\}) \geq U(L)$ which trivially holds in HEGs.}:

\medskip

\begin{compactitem}
\item $U(L \cup \{j\}) > U(C_i)$

(which means there exists a skill $s \in S$ such that $E_L(s) > E_{C_i}(s)$, which then implies that there exists a critical agent $j' \in L$ for coalition $C_i$),

\medskip

\item $U(C_i \setminus \{j\}) \geq U(C_i)$

(which means that agent $j$ is not critical for $C_i$),
\end{compactitem}

\medskip

If this is the case, then we update $C_i$ and $L$ as follows:

\medskip

\begin{compactitem}
\item Let $C'_i = (C_i \setminus \{j\}) \cup \{j'\}$.
\item Let $L' = (L \setminus \{j'\}) \cup \{j\}$.
\end{compactitem}

\medskip

Let $\gamma(C)$ denote the number of critical agents for coalition $C$ that are also in coalition $C$. Since $j'$ is critical for $C_i$ whereas $j$ is not, we have $\gamma(C'_i) > \gamma(C_i)$. Therefore, if we update $C_i$ as $C'_i$, $L$ as $L'$ and repeat the above procedure, we will eventually reach a CIS partition. We now only need to show that the number of iterations will be polynomial.

Notice that $E_{C'_i}(s) \geq E_{C_i}(s)$ for all $s \in S$. Therefore, no matter how many iterations have passed, agent $j$ cannot ever become critical for coalition $C_i$. However, for $j$ to be able to return back to $C_i$, she must be critical for $C_i$. Hence, $j$ cannot ever return back to $C_i$. This means that the number of iterations is bounded by $\floor{\sfrac{|N|}{\kappa}} \cdot |N|$, which finishes our proof. \qed
\end{proof}

\section{Approximating Core Stable Partitions}
\label{Section:Approximation}

We devote this section to show that finding a $(1-\frac{1}{e})$-approximate CS partition of an HEG instance is polynomial-time solvable, and this is the best possible approximation ratio achievable, unless $\sf P = NP$.

\medskip

Initially, we study the problem of finding a coalition with maximum joint utility, which we formally specify as follows.

\bigskip

$\sf MAXIMUM$-$\sf JOINT$-$\sf UTILITY$ $=$ \textit{``Given an HEG instance $\mathcal{G} = (N, S, e, \kappa)$, find a subset of agents $C^* \subseteq N$ which maximizes $U(C^*)$ such that $|C^*| \leq \kappa$.''}

\bigskip

We now show that the above problem is $(1 - \frac{1}{e})$-inapproximable even for $(0,1)$-HEGs, unless $\sf P = NP$, which we then use to show that a $(1 - \frac{1}{e} + \epsilon)$-approximate CS partition cannot be found in polynomial-time for any $\epsilon > 0$, unless $\sf P = NP$.

\begin{lemma} \label{Lemma:Inapproximability:MaximumJointUtility}
In $(0,1)$-HEGs, the $\sf MAXIMUM$-$\sf JOINT$-$\sf UTILITY$ problem is inapproximable within better than ratio of $1 - \frac{1}{e}$, unless $\sf P = NP$.
\end{lemma}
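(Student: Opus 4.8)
The plan is to establish inapproximability by a reduction from \textsf{MAXIMUM-COVERAGE}, which is the canonical problem exhibiting exactly the $1-\frac{1}{e}$ threshold under the assumption $\textsf{P}\neq\textsf{NP}$ (Feige's celebrated result). Recall that an instance of \textsf{MAXIMUM-COVERAGE} consists of a ground set of elements, a collection of subsets of that ground set, and a budget $k$; the objective is to choose at most $k$ of the subsets so as to maximize the number of covered elements. The key observation is that the $\textsf{MAXIMUM-JOINT-UTILITY}$ objective in a $(0,1)$-HEG is itself a coverage-type function: if each agent is a $0/1$ vector over skills, then $U(C)=\sum_{s\in S}\max_{i\in C}e_i(s)$ counts exactly the number of skills for which at least one member of $C$ has expertise $1$, i.e.\ the number of skills ``covered'' by the agents in $C$.

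The reduction I would use is therefore essentially the identity. Given a \textsf{MAXIMUM-COVERAGE} instance, I would create one skill $s$ for each element of the ground set and one agent $i$ for each subset in the collection, setting $e_i(s)=1$ precisely when the corresponding subset contains the corresponding element, and $e_i(s)=0$ otherwise. I would set the coalition-size bound $\kappa=k$. Then for any coalition $C$ of size at most $\kappa$, the joint utility $U(C)=\sum_{s\in S}\max_{i\in C}e_i(s)$ equals the number of elements covered by the chosen subsets, so the optimal coalition corresponds exactly to an optimal cover and vice versa. This is a strict objective-preserving reduction: a solution to one instance is, value-for-value, a solution to the other. Consequently, any polynomial-time algorithm that approximates $\textsf{MAXIMUM-JOINT-UTILITY}$ within a factor strictly better than $1-\frac{1}{e}$ would, via this reduction, approximate \textsf{MAXIMUM-COVERAGE} within a factor strictly better than $1-\frac{1}{e}$, contradicting Feige's inapproximability result unless $\textsf{P}=\textsf{NP}$. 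Since the constructed expertise functions take values only in $\{0,1\}$, the hardness already holds for $(0,1)$-HEGs, which is exactly the restricted class claimed in the statement.

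I expect the conceptual content to be almost entirely in recognizing that $U$ restricted to $(0,1)$-HEGs \emph{is} the coverage function, after which the reduction is immediate; the main thing to be careful about is the direction and exactness of the correspondence. The genuine obstacle, such as it is, lies not in the construction but in invoking the right hardness result with the correct hypotheses: Feige's theorem gives hardness of approximating \textsf{MAXIMUM-COVERAGE} (equivalently \textsf{SET-COVER}) beyond $1-\frac{1}{e}$ under $\textsf{P}\neq\textsf{NP}$, and one must ensure that the reduction is approximation-preserving in the strong sense that the optimal values coincide exactly (not merely up to a multiplicative slack), so that the threshold transfers cleanly. Because the reduction is value-preserving and polynomial in size, this transfer is clean, and the lemma follows.
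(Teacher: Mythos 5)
Your proposal is correct and is essentially identical to the paper's proof: both give the same value-preserving reduction from $\sf MAXIMUM$-$\sf COVERAGE$ (skills = elements, agents = subsets with $0/1$ expertise, $\kappa = k$) and invoke Feige's $1-\frac{1}{e}$ inapproximability threshold. No gaps.
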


\begin{proof}
We give an approximation preserving $S$-reduction \cite{ApproximationPreservingReductions} from $\sf MAXIMUM$-$\sf COVERAGE$ problem, which is known to be inapproximable within better than $1 - \frac{1}{e}$, unless $\sf P = NP$ \cite{ApproximationThreshold}. An instance of $\sf MAXIMUM$-$\sf COVERAGE$ problem consists of a universe $\mathcal{U} = \{1, \ldots, m\}$, a family $\mathcal{S} = \{\mathcal{S}_1, \ldots, \mathcal{S}_n\}$ of subsets of $\mathcal{U}$, an integer $k$, and the objective of finding a subset $\mathcal{C} \subseteq \mathcal{S}$ such that $|\mathcal{C}| \leq k$ which maximizes $\cov(\mathcal{C}) = |\cup_{\mathcal{S}_i \in \mathcal{C}} \mathcal{S}_i|$. Given a $\sf MAXIMUM$-$\sf COVERAGE$ instance $\mathcal{I} = (\mathcal{U}, \mathcal{S}, k)$, we build a $(0, 1)$-HEG instance $\mathcal{G_I} = (N, S, e, \kappa)$ as follows:

\medskip

\begin{compactitem}
\item[--] Universe $\mathcal{U}$ corresponds to the set of skills $S$.
\item[--] Each subset $\mathcal{S}_i \in \mathcal{S}$ corresponds to an agent $i \in N$, whose expertise function is defined for each skill $s \in S$ as $e_i(s) = 1$ if $s \in \mathcal{S}_i$, and $e_i(s) = 0$ otherwise.

\item[--] Finally, $k$ corresponds to the upper bound $\kappa$ on the sizes of coalitions.
\end{compactitem}

\medskip

Notice that a coalition $C$ (i.e., a subset of the set of agents $N$ such that $|C| \leq \kappa$) corresponds to a subset $\mathcal{C} \subseteq \mathcal{S}$ such that $U(C) = \cov(\mathcal{C})$ and $|\mathcal{C}| \leq k$; and the reverse also holds, which completes our $S$-reduction. \qed
\end{proof}

\begin{theorem} \label{Theorem:Inapproximability:CoreStable}
In $(0,1)$-HEGs, a $(1 - \frac{1}{e} + \epsilon)$-approximate CS partition cannot be found in polynomial-time for any constant $\epsilon > 0$, unless $\sf P = NP$.
\end{theorem}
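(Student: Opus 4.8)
The plan is to reduce from the inapproximability of MAXIMUM-JOINT-UTILITY established in Lemma \ref{Lemma:Inapproximability:MaximumJointUtility}, and leverage the fact that in a $(0,1)$-HEG, an approximate core stable partition certifies an approximately optimal coalition. Suppose toward a contradiction that for some fixed $\epsilon > 0$ there is a polynomial-time algorithm producing a $(1 - \frac{1}{e} + \epsilon)$-approximate CS partition $\pi$ of any given $(0,1)$-HEG instance. I would show that from such a $\pi$ one can extract, in polynomial time, a coalition whose joint utility is within a factor of $(1 - \frac{1}{e} + \epsilon)$ of the optimum $U(C^*)$ of MAXIMUM-JOINT-UTILITY, contradicting Lemma \ref{Lemma:Inapproximability:MaximumJointUtility}.

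First I would pin down the connection between approximate core stability and the value of the best coalition. Let $C^*$ be an optimal coalition with $|C^*| \le \kappa$ maximizing $U(C^*)$. The key observation is that if $\pi$ is $\alpha$-approximate CS with $\alpha = 1 - \frac{1}{e} + \epsilon$, then $C^*$ does not $\alpha$-approximately block $\pi$, so there exists at least one agent $i \in C^*$ with $u_i(\pi) \ge \alpha \cdot U(C^*)$. Hence the maximum individual utility $\max_{i \in N} u_i(\pi)$, which equals $\max_{C \in \pi} U(C)$, is at least $\alpha \cdot U(C^*)$. Reading off the coalition $C \in \pi$ attaining this maximum gives a single coalition with $|C| \le \kappa$ (since $\pi$ is a valid partition) and $U(C) \ge \alpha \cdot U(C^*)$, which is exactly an $\alpha$-approximation to MAXIMUM-JOINT-UTILITY. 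The extraction is trivially polynomial given $\pi$.

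Combining these, the assumed algorithm would yield a polynomial-time $(1 - \frac{1}{e} + \epsilon)$-approximation for MAXIMUM-JOINT-UTILITY on $(0,1)$-HEGs, directly contradicting Lemma \ref{Lemma:Inapproximability:MaximumJointUtility}, and therefore forcing $\sf P = NP$. I would present this as: assume the algorithm exists, run it, read off the best coalition in the output partition, and invoke the lemma.

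The main obstacle, and the step deserving the most care, is verifying the inequality $\max_{C \in \pi} U(C) \ge \alpha \cdot U(C^*)$ rigorously from the definition of $\alpha$-approximate blocking. I would make sure the logic is stated in the correct direction: by definition $C^*$ $\alpha$-approximately blocks $\pi$ iff $\alpha \cdot U(C^*) > u_i(\pi)$ for \emph{all} $i \in C^*$; since $\pi$ is $\alpha$-approximate CS, no coalition (in particular $C^*$) $\alpha$-approximately blocks it, so the negation holds, namely there is some $i \in C^*$ with $u_i(\pi) \ge \alpha \cdot U(C^*)$. One subtlety to address is the possibility that $C^* = \emptyset$ or has utility $0$, in which case the bound is vacuous and any nonempty output coalition already suffices; I would handle this degenerate case with a one-line remark. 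A second subtlety is that the inapproximability in Lemma \ref{Lemma:Inapproximability:MaximumJointUtility} is phrased for $(0,1)$-HEGs, which matches the class in this theorem, so no class-crossing argument is needed and the reduction composes cleanly.
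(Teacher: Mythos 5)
Your proposal is correct and follows essentially the same argument as the paper's proof: assume the polynomial-time algorithm exists, observe that since the optimal coalition $C^*$ cannot $(1-\frac{1}{e}+\epsilon)$-approximately block the output partition $\pi$ some agent $i^* \in C^*$ satisfies $u_{i^*}(\pi) \geq (1-\frac{1}{e}+\epsilon)\cdot U(C^*)$, extract the maximum-utility coalition of $\pi$, and contradict Lemma \ref{Lemma:Inapproximability:MaximumJointUtility}. The only addition is your remark on the degenerate case $U(C^*)=0$, which is harmless but not needed.
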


\begin{proof}
For some $\epsilon > 0$, suppose that a $(1 - \frac{1}{e} + \epsilon)$-approximate CS partition $\pi$ of a given $(0,1)$-HEG instance $\mathcal{G} = (N, S, e, \kappa)$ can be found in polynomial-time. Let $C^*$ be a coalition in $\mathcal{G}$ with maximum joint utility. Then, there exists an agent $i^* \in C^*$ such that $(1 - \frac{1}{e} + \epsilon) \cdot U(C^*) \leq u_{i^*}(\pi)$, because otherwise $C^*$ would $(1 - \frac{1}{e} + \epsilon)$-approximately block $\pi$.

Let $C \in \pi$ be a coalition such that $U(C) \geq U(C')$ for all $C' \in \pi$. Note that $C$ can be found in polynomial-time. Notice that $(1 - \frac{1}{e} + \epsilon) \cdot U(C^*) \leq u_{i^*}(\pi)\leq U(C)$. This means that we could devise a $(1 - \frac{1}{e} + \epsilon)$-approximation algorithm for $\sf MAXIMUM$-$\sf JOINT$-$\sf UTILITY$ problem  by simply returning $C$. Unless $\sf P = NP$, this creates a contradiction by Lemma \ref{Lemma:Inapproximability:MaximumJointUtility}, which finishes our proof.\qed
\end{proof}

Theorem \ref{Theorem:Inapproximability:CoreStable} also has the following interesting implication under the widely believed assumption that $\sf NP \neq co\text{-}NP$.

\begin{theorem} \label{Theorem:Verification:CoreStable}
In $(0, 1)$-HEGs, it is not possible to verify whether a given partition is CS or not in polynomial-time, unless $\sf NP = co\text{-}NP$.
\end{theorem}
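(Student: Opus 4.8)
The plan is to prove the stronger statement that deciding whether a given partition is \emph{not} core stable is $\sf NP$-complete; core-stability verification is exactly the complement of this problem, so it is $\sf co\text{-}NP$-complete, and a polynomial-time verifier would place a $\sf co\text{-}NP$-complete language in $\sf P \subseteq NP$, forcing $\sf NP = co\text{-}NP$. Membership in $\sf NP$ of the non-core-stability problem is immediate: a coalition $C$ with $|C| \le \kappa$ that blocks $\pi$ is a polynomial-size certificate, and the condition $U(C) > u_i(\pi)$ for all $i \in C$ is checkable in polynomial time, since each joint utility is just a sum of skill-wise maxima.

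For hardness I would reduce from $\sf MAXIMUM$-$\sf COVERAGE$, whose $\sf NP$-hardness already underlies Lemma \ref{Lemma:Inapproximability:MaximumJointUtility}: deciding whether at most $k$ sets cover at least $t$ elements is $\sf NP$-complete. Reusing the expertise encoding there --- skills as the universe, agents as sets, $\kappa = k$ --- the joint utility of a coalition equals the coverage of the corresponding subfamily. The target is a $(0,1)$-HEG instance together with a partition $\pi$ in which every coverage-bearing agent has utility exactly $t-1$; then a blocking coalition exists precisely when some admissible coalition reaches joint utility $t$, i.e. precisely when the coverage instance is a yes-instance. Given such a construction, the reduction is complete.

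The step I expect to be the crux is realizing this uniform threshold \emph{through an actual partition} while suppressing spurious blocking coalitions introduced by the threshold-setting gadget. The naive fixes fail in opposite ways: padding each agent up to $t-1$ with \emph{private} auxiliary skills makes joint utility additive across members, so almost every pair of agents already blocks; padding with \emph{shared} auxiliary skills merely shifts all utilities by a constant, collapsing the question to the polynomial-time-decidable test of whether the set family forms a chain. A cleaner route is to keep the genuine coverage agents at utility $t-1$ and to neutralise everything else with high-utility \emph{inert} agents --- agents whose planted coalition already attains the global maximum $U$-value $V^{*}$ over all coalitions of size at most $\kappa$. Such an agent can never belong to a blocking coalition, since a block containing it would need joint utility exceeding $V^{*} \ge u_i(\pi)$; this confines every candidate block to the coverage agents, where blocking is equivalent to covering at least $t$ elements.

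What remains delicate, and is the main obstacle, is engineering the coverage agents to sit at utility $t-1$ without themselves admitting a low-utility shortcut to a block, and choosing $\kappa$ together with the auxiliary skills so that no combination of inert and coverage agents can exceed $t-1$ except through a genuine coverage of value $t$; arguing that no unintended block survives this tuning is where the construction must be handled most carefully. Once such a reduction is in place, non-core-stability is $\sf NP$-complete, core stability is $\sf co\text{-}NP$-complete, and hence, as above, a polynomial-time test for core stability would imply $\sf NP = co\text{-}NP$.
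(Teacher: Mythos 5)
Your high-level framework is sound: if deciding non-core-stability were $\sf NP$-hard, then a polynomial-time verifier for core stability would indeed force $\sf NP = co\text{-}NP$ (in fact $\sf P = NP$). But the proof has a genuine gap, and it sits exactly where you flag it: the reduction that would make non-core-stability $\sf NP$-hard is never constructed. This is not a deferrable detail; it is the entire content of the claim. The obstruction is structural. In a $(0,1)$-HEG the threshold utility $t-1$ must be realized by actual coalitions of the exhibited partition $\pi$, so whatever skills realize it are part of the instance and remain available to deviating coalitions. Every padding scheme collapses on this point, including the ``inert agent'' fix you propose: an agent whose planted coalition already attains the global maximum $V^*$ indeed can never deviate, but for that very reason it cannot be the coalition-mate that lifts a coverage agent to utility $t-1$, since the coverage agent's utility is the joint utility of its own coalition, which would then be $V^*$ rather than $t-1$. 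Conversely, any agent that does carry the roughly $t-1-|\mathcal{S}_i|$ auxiliary skills needed to lift agent $i$ to $t-1$ sits at utility $t-1$ itself, and those auxiliary skills can be recombined: for example, with sets of size $3$ and an auxiliary agent carrying $t-4$ fresh skills, that agent together with any two coverage agents having distinct sets forms a coalition of joint utility at least $4 + (t-4) = t > t-1$, a spurious block. So the ``delicate tuning'' you postpone is precisely where all natural constructions break down, and it is far from clear that any such reduction exists; note that the paper itself never claims verification is $\sf coNP$-complete, settling deliberately for the weaker conditional statement.

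The paper's proof takes a fundamentally different route that avoids constructing hard verification instances altogether. It observes that, by the potential function $\psi$ of Section \ref{Section:Existence}, CS partitions are exactly the local optima of a natural local search problem, and that finding one is $\sf NP$-hard by Theorem \ref{Theorem:Inapproximability:CoreStable}. If CS-ness could be verified in polynomial time, this $\sf NP$-hard problem would lie in $\sf PLS$, and by the result of \cite{PLS} an $\sf NP$-hard problem in $\sf PLS$ implies $\sf NP = co\text{-}NP$: the hypothetical verifier is used inside an $\sf NP$ machine (guess a CS partition of the instance produced by the hardness reduction, verify it, and read off the answer to the underlying $\sf co\text{-}NP$ question), rather than as the target of a Karp reduction. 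If you want to repair your write-up, this is the salvageable path: you already invoke Lemma \ref{Lemma:Inapproximability:MaximumJointUtility} and the machinery behind Theorem \ref{Theorem:Inapproximability:CoreStable}, and combining them with the potential-function characterization of CS partitions yields the theorem without ever exhibiting a single hard $(\mathcal{G}, \pi)$ pair.
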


\begin{proof}
We first define a local search problem on solutions of a given $(0,1)$-HEG instance $\mathcal{G} = (N, S, e, \kappa)$ as follows:

\medskip

\begin{compactitem}
\item[--] For all partitions $\pi$, we define all partitions $\pi_C$ where a coalition $C \subseteq N$ blocks partition $\pi$ as the neighbors of partition $\pi$.

\item[--] We define a binary relation $\succ$ which is to be locally maximized, where $\pi_C \succ \pi$ for all of those neighbors $\pi_C$ of partition $\pi$.
\end{compactitem}

\medskip

Due to the potential function $\psi$ we have given in Section \ref{Section:Existence}, notice that finding a locally maximum solution with respect to the binary relation $\succ$ and the neighborhood relation defined above, is exactly the same problem as finding a CS partition. This means that finding a CS partition of a given $(0,1)$-HEG instance is a local search problem by definition.

Local search problems in which a solution can be verified to be optimal in polynomial-time are called polynomial local search problems, whose complexity class is denoted as $\sf PLS$.
If there exists a problem in $\sf PLS$ which is also $\sf NP\text{-}HARD$, then it is known that $\sf NP = co\text{-}NP$ \cite{PLS}.

Due to Theorem \ref{Theorem:Inapproximability:CoreStable}, we know that the local search problem of finding a CS partition of a given $(0,1)$-HEG instance is $\sf NP\text{-}HARD$. This means that it cannot be also in $\sf PLS$, unless $\sf NP = co\text{-}NP$. Then, it cannot be possible to verify if a given partition is CS or not in polynomial-time by definition unless $\sf NP = co\text{-}NP$, which finishes our proof. \qed

\end{proof}

We now show that $\sf MAXIMUM$-$\sf JOINT$-$\sf UTILITY$ problem is approximable within a ratio of $1 - \frac{1}{e}$ by the standard greedy algorithm: \textit{``Begin with an empty coalition $C$, and then greedily add the agent that increase the joint utility of $C$ the most, until reaching the upper bound of $\kappa$''}. We then use the above algorithm as a subroutine to find a $(1-\frac{1}{e})$-approximate CS partition in HEGs.

\begin{lemma} \label{Lemma:Appromixation:MaximumJointUtility}
In HEGs, $\sf MAXIMUM$-$\sf JOINT$-$\sf UTILITY$ problem is $(1 - \frac{1}{e})$-approximable by the standard greedy algorithm.
\end{lemma}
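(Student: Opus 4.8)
The plan is to recognize that $\sf MAXIMUM$-$\sf JOINT$-$\sf UTILITY$ is precisely the problem of maximizing a monotone submodular set function subject to a cardinality constraint, and then to invoke the classical analysis of the greedy algorithm for such problems. By Observation~\ref{Observation:Submodular}, the joint utility function $U$ is monotone and submodular, so the $(1 - \frac{1}{e})$ guarantee of the standard greedy algorithm follows from the well-known result of Nemhauser, Wolsey, and Fisher on greedy submodular maximization. I would nonetheless include a short self-contained argument, since it is the cleanest way to make the paper's reliance on Observation~\ref{Observation:Submodular} explicit.

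First I would fix notation. Let $C^*$ be an optimal coalition, and assume without loss of generality that $|C^*| = \kappa$; this is valid because $U$ is monotone, so padding any optimal coalition up to size $\kappa$ never decreases its joint utility. Let $C_t$ denote the coalition maintained by the greedy algorithm after $t$ insertions, with $C_0 = \emptyset$, and let $\delta_t = U(C^*) - U(C_t)$ measure the remaining gap to optimality.

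The core of the argument, and the only step requiring genuine care, is to show that each greedy step closes at least a $\frac{1}{\kappa}$-fraction of the remaining gap, i.e. $U(C_{t+1}) - U(C_t) \geq \frac{1}{\kappa} \delta_t$. To establish this I would write $U(C^*) - U(C_t) \leq U(C^* \cup C_t) - U(C_t)$ by monotonicity, and then decompose the right-hand side as a telescoping sum of the marginal utilities obtained by adding the elements of $C^* \setminus C_t$ to $C_t$ one at a time. Submodularity bounds each such marginal utility from above by the marginal utility of adding that same element directly to $C_t$; since the greedy algorithm selects the element of maximum marginal utility, and since $|C^* \setminus C_t| \leq \kappa$, the greedy gain dominates the average of these marginals, which yields the desired inequality. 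This is the main (and entirely standard) obstacle, and it is exactly where submodularity from Observation~\ref{Observation:Submodular} is invoked.

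Finally, the per-step inequality rearranges to $\delta_{t+1} \leq (1 - \frac{1}{\kappa}) \delta_t$, whence $\delta_\kappa \leq (1 - \frac{1}{\kappa})^\kappa \delta_0 \leq \frac{1}{e} U(C^*)$, using $\delta_0 = U(C^*)$ together with the elementary inequality $(1 - \frac{1}{\kappa})^\kappa \leq \frac{1}{e}$. Rearranging gives $U(C_\kappa) \geq (1 - \frac{1}{e}) U(C^*)$, so the coalition returned by the greedy algorithm is a $(1 - \frac{1}{e})$-approximation of the maximum joint utility, as claimed.
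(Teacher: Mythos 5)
Your proposal is correct and takes essentially the same approach as the paper: both use Observation~\ref{Observation:Submodular} to identify $\sf MAXIMUM$-$\sf JOINT$-$\sf UTILITY$ as maximization of a monotone submodular function subject to a cardinality constraint, and then appeal to the classical Nemhauser--Wolsey--Fisher guarantee for the standard greedy algorithm. The only difference is that you unroll the standard per-step analysis, namely $U(C_{t+1}) - U(C_t) \geq \frac{1}{\kappa}\left(U(C^*) - U(C_t)\right)$ and the resulting bound $\left(1 - \frac{1}{\kappa}\right)^{\kappa} \leq \frac{1}{e}$, whereas the paper simply cites this result.
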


\begin{proof}
Recall that the joint utility function $U$ is monotone and submodular as given in Observation \ref{Observation:Submodular}. Due to the upper bound on the sizes of coalitions, this means that $\sf MAXIMUM$-$\sf JOINT$-$\sf UTILITY$ is simply a problem of maximizing a monotone submodular function subject to a cardinality constraint, which is $(1 - \frac{1}{e})$-approximable by the standard greedy algorithm \cite{Submodular}. \qed
\end{proof}

\begin{theorem} \label{Theorem:Approximation:CoreStable}
In HEGs, a $(1 - \frac{1}{e})$-approximate CS partition can be found in polynomial-time.
\end{theorem}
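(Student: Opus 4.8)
The plan is to adapt the greedy algorithm that establishes the existence of a CS partition in HGCRP \cite{Partnerships}, replacing exact joint-utility maximization with the approximate subroutine from Lemma \ref{Lemma:Appromixation:MaximumJointUtility}. Concretely, I would maintain a set $A$ of still-unassigned agents, initialized to $N$. While $A \neq \emptyset$, I run the standard greedy algorithm on the HEG instance restricted to $A$ to obtain a coalition $C_j \subseteq A$ with $|C_j| \leq \kappa$, add $C_j$ to the partition, and delete its agents from $A$. Since each iteration removes at least one agent, the loop runs at most $|N|$ times, and each greedy call is polynomial-time by Lemma \ref{Lemma:Appromixation:MaximumJointUtility}, so the whole procedure runs in polynomial-time.

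First I would record the guarantee provided at each step. If $A_j$ denotes the set of available agents when $C_j$ is formed, then by Lemma \ref{Lemma:Appromixation:MaximumJointUtility} we have $U(C_j) \geq (1 - \frac{1}{e}) \cdot U(C^*_j)$, where $C^*_j$ is a maximum joint-utility coalition of size at most $\kappa$ contained in $A_j$. Let $\pi$ be the partition returned. To show that $\pi$ is $(1 - \frac{1}{e})$-approximate CS, it suffices to prove that every coalition $C$ with $|C| \leq \kappa$ has some agent $i \in C$ with $(1 - \frac{1}{e}) \cdot U(C) \leq u_i(\pi)$, since this exactly negates the condition for $C$ to $(1 - \frac{1}{e})$-approximately block $\pi$.

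The key step is the following \emph{first-contact} argument. Fix an arbitrary coalition $C$ and let $C_j$ be the first coalition formed (smallest $j$) that intersects $C$. By minimality of $j$, none of the agents of $C$ had been removed before step $j$, so $C \subseteq A_j$; in particular $C$ is a feasible candidate at step $j$, and hence $U(C^*_j) \geq U(C)$. Combining this with the per-step guarantee yields $U(C_j) \geq (1 - \frac{1}{e}) \cdot U(C^*_j) \geq (1 - \frac{1}{e}) \cdot U(C)$. Now pick any agent $i \in C \cap C_j$, which is nonempty by the choice of $C_j$. Since $\pi(i) = C_j$, we obtain $u_i(\pi) = U(C_j) \geq (1 - \frac{1}{e}) \cdot U(C)$, as required.

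I expect the main obstacle to be verifying that the approximation factor propagates cleanly through this argument rather than compounding across iterations. The observation that makes it go through is that the guarantee is invoked only at the single step where $C$ is first touched, and at that step all of $C$ is still available; thus the global approximation ratio equals the per-call ratio of the greedy subroutine, with no degradation across the (possibly many) iterations. Everything else — the polynomial running time and the reduction of ``no coalition $(1-\frac{1}{e})$-blocks $\pi$'' to the existence of one satisfied agent in each coalition — is routine.
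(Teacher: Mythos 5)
Your proposal is correct and is essentially the paper's own proof: both algorithms repeatedly run the standard greedy subroutine of Lemma \ref{Lemma:Appromixation:MaximumJointUtility} on the remaining agents and form the returned coalition, yielding polynomial running time. The only difference is presentational --- the paper argues by recursion (no agent in the greedy coalition $C$ can ever join a $(1-\frac{1}{e})$-approximately blocking coalition, so the problem reduces to the subinstance on $N \setminus C$), whereas you unroll that induction into a direct ``first-contact'' argument over all potential blocking coalitions; the underlying reasoning is identical.
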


\begin{proof}
Suppose that an HEG instance $\mathcal{G} = (N, S, e, \kappa)$ is given. Let $C^*$ be a coalition of $\mathcal{G}$ with maximum joint utility. Note that running the standard greedy algorithm will return a coalition $C$ such that $U(C) \geq (1 - \frac{1}{e}) \cdot U(C^*)$ due to Lemma \ref{Lemma:Appromixation:MaximumJointUtility}. Since no agent can have a strictly greater utility than $U(C^*)$, notice that if $\pi$ is a partition where $C \in \pi$ then no agent $i \in C$ can participate in a coalition that $(1-\frac{1}{e})$-approximately blocks $\pi$. Thus, by forming coalition $C$, we can reduce the problem of finding a $(1-\frac{1}{e})$-approximate CS partition in $\mathcal{G}$ into one of finding a $(1-\frac{1}{e})$-approximate CS partition in $\mathcal{G'} = (N \setminus C, S, e, \kappa)$. This means that we can build a $(1-\frac{1}{e})$-approximate CS partition in polynomial-time by repeatedly running the standard greedy algorithm, which finishes our proof.  \qed
\end{proof}

\section{Intractability of Computing Optimal Partitions}
\label{Section:Optimality}

We show that finding a perfect, SO or PO partition even in $(0,1)$-HEGs is intractable, and so is even verifying if a partition is PO.

\medskip

For any subclass of hedonic games, if deciding whether there exists a perfect partition is $\sf NP\text{-}HARD$, and if a given partition can be verified to be perfect in polynomial-time, then it is known that finding a PO or SO partition is also $\sf NP\text{-}HARD$ in the same subclass of hedonic games \cite{Pareto}. However, we cannot directly use this method since it is not clear how we can check efficiently whether a given partition is perfect in HEGs. The trick is to construct special HEG instances where a partition can be verified to be perfect easily, throughout the reductions.

\begin{theorem} \label{Theorem:Hardness}
In $(0,1)$-HEGs: $(i)$ deciding if a perfect partition exists is $\sf NP\text{-}HARD$, $(ii)$ finding a SO partition is $\sf NP\text{-}HARD$, $(iii)$ finding a PO partition is $\sf NP\text{-}HARD$ and $(iv)$ verifying whether a partition is PO is $\sf coNP$-$\sf COMPLETE$.
\end{theorem}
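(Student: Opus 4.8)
The plan is to prove part $(i)$ directly by reduction, to obtain parts $(ii)$ and $(iii)$ from $(i)$ via the meta-theorem of \cite{Pareto}, and to treat part $(iv)$ with separate arguments for $\sf coNP$ membership and $\sf coNP$-hardness. The single idea driving everything is a transparency property of $(0,1)$-HEGs: for every coalition $C$ we have $U(C) = |\bigcup_{i \in C} \{s : e_i(s) = 1\}| \le |S|$, with equality exactly when $C$ covers all skills. Hence, if I build instances in which every agent's most preferred coalition already attains the trivial upper bound $|S|$, then a partition is perfect if and only if each of its coalitions covers $S$, which is checkable in polynomial time. This is precisely the ``special HEG instances'' the outline promises, and it is what keeps the perfection test easy even though computing an agent's best coalition is $\sf NP$-hard in general (Lemma \ref{Lemma:Inapproximability:MaximumJointUtility}).

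For part $(i)$ I would reduce from a partition-into-set-covers (domatic-type) problem, with \textsc{Exact Cover by 3-Sets} or \textsc{3-Dimensional Matching} as concrete candidates. Agents would correspond to the ground elements (or to the sets), the skills would encode the incidence structure so that a group of the allowed size covers $S$ exactly when it corresponds to a legal block of the source instance, and $\kappa$ would be set to the block size. The construction must be arranged so that (a) every agent's maximum joint utility equals $|S|$, making the perfection test polynomial by the transparency property, and (b) a partition of $N$ into size-$\le\kappa$ coalitions each covering $S$ exists if and only if the source instance is a yes-instance; then a perfect partition exists iff the source is solvable. The main design burden here is to encode the combinatorial constraint \emph{purely} through which groups of indicator vectors union to all of $S$, while pinning each per-agent maximum transparently at $|S|$ and ensuring leftover agents cannot spuriously create or destroy perfect partitions.

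Parts $(ii)$ and $(iii)$ then follow from the observation that a perfect partition, whenever it exists, Pareto-dominates every non-perfect partition; consequently a perfect partition exists if and only if some (equivalently, every) PO or SO partition is perfect. Thus, if one could compute a PO (resp.\ SO) partition in polynomial time, combining it with the polynomial perfection test of part $(i)$ would decide existence of a perfect partition, contradicting $(i)$ unless $\sf P = NP$. This is exactly the route of \cite{Pareto}, which applies here precisely because our instances make perfection poly-verifiable.

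For part $(iv)$, membership in $\sf coNP$ is immediate: a partition $\pi$ is \emph{not} PO precisely when some partition $\pi'$ Pareto-dominates it, and such a $\pi'$ is a polynomially checkable certificate. For $\sf coNP$-hardness I would reduce from the complement of the perfect-partition existence problem of part $(i)$: from each source instance I would build a $(0,1)$-HEG together with an explicit reference partition $\pi$ on a ``plateau'' in which every agent currently receives utility one short of $|S|$, designed so that $\pi$ admits a Pareto improvement if and only if all agents can be simultaneously lifted to $|S|$, i.e., iff a perfect partition exists. Then $\pi$ is PO exactly when the source is a no-instance, yielding $\sf coNP$-hardness. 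I expect the main obstacle throughout to be this rigidity requirement: engineering the plateau so that the \emph{only} Pareto improvements correspond to a global solution of the embedded hard problem, and, more generally, keeping the maximum joint utility transparently equal to $|S|$ across all the reductions so that perfection remains polynomially verifiable despite \textsc{Maximum-Joint-Utility} being $\sf NP$-hard.
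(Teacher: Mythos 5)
Your parts (ii) and (iii), and the $\sf coNP$-membership half of (iv), coincide with the paper's argument and are sound, given (i). The genuine gap is in your $\sf coNP$-hardness reduction for (iv). A uniform ``plateau'' in which every agent sits at $|S|-1$ cannot deliver the equivalence you need, because a Pareto improvement does \emph{not} have to lift all agents: it suffices to lift a single coalition to $|S|$ while leaving every other coalition on the plateau, since those agents are then no worse off. In the natural instantiation of your own part-(i) encoding (one skill per illegal triple, lacked exactly by its three agents, so that a size-3 coalition has utility $|S|$ iff it is a legal triple and $|S|-1$ otherwise), the plateau partition admits a Pareto improvement as soon as \emph{one} legal triple exists anywhere --- a condition decidable in polynomial time --- not only when the whole instance is solvable. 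So the direction ``$\pi$ not PO $\Rightarrow$ perfect partition exists'' fails, and nothing in your sketch supplies the rigidity you yourself flag as the obstacle. The paper achieves rigidity by doing the opposite of a plateau: in its reference partition $\pi = (X_1, \ldots, X_x, C)$, almost all agents are pinned \emph{exactly at the maximum} $m$ (each $X_i$ contains a universal agent having every skill), and only the agents of the one coalition $C$ sit below $m$. Pinned agents can neither gain nor be allowed to lose, and a counting argument does the rest: any partition of this instance has at least $x+1$ coalitions, hence contains a coalition $C'$ disjoint from the universal agents; in a Pareto-dominating partition the saturated capacity around the universal agents forces $U(C') = m$, i.e., $C'$ is a set cover of size at most $k$, and conversely a small set cover yields a perfect partition dominating $\pi$. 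This ``pin at the max plus pigeonhole'' mechanism is exactly the missing idea.

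Part (i) of your proposal is also only a plan: you state the properties the encoding must satisfy but not the encoding itself. Your track is completable --- for every 3-subset of agents that is not a legal triple of the \textsc{X3C}/\textsc{3DM} instance, introduce a skill lacked by exactly those three agents; then a coalition of size at most $3$ attains $|S|$ iff it is a legal triple, singletons and pairs never do, and perfection becomes polynomially checkable --- but you would still need to handle agents lying in no legal triple (so that every agent's optimum is transparently $|S|$) before the perfection test is valid. By contrast, the paper uses one construction for all four parts: reduce from $\sf SET$-$\sf COVER$, add $x = \lceil (n-k)/(k-1) \rceil$ universal agents so that every agent's best attainable utility is trivially $m$, and observe that every partition contains a coalition free of universal agents, which corresponds to a set cover iff its joint utility is $m$. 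That single gadget makes perfection poly-verifiable for (i)--(iii) and simultaneously provides the pinned reference partition needed for (iv), which is where your route diverges and breaks.
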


\begin{proof}
All of the proofs for $(i)$, $(ii)$, $(iii)$ and $(iv)$ are via a polynomial-time mapping reduction from $\sf SET$-$\sf COVER$ problem, in which we are given a universe $\mathcal{U} = \{1, \ldots, m\}$ and a family $\mathcal{S} = \{\mathcal{S}_1, \ldots, \mathcal{S}_n\}$ of subsets of $\mathcal{U}$ along with a positive integer $k$; and then, we are required to decide whether there exists a \textit{set cover} $C \subseteq \mathcal{S}$ whose size is at most $k$, i.e., we need to have $|C| \leq k$ and $\cup_{\mathcal{S}_i \in C} S_i = \mathcal{U}$. Given an instance $\mathcal{I} = (\mathcal{U, S}, k)$ of $\sf SET$-$\sf COVER$, we build a $(0, 1)$-HEG instance $\mathcal{G_I} = (N, S, e, \kappa)$ as follows:

\medskip

\begin{compactitem}
\item[--] Universe $\mathcal{U}$ corresponds to the set of skills $S$.
\item[--] Each subset $\mathcal{S}_i \in \mathcal{S}$ corresponds to an agent $i \in N$, whose expertise function is defined for each skill $s \in S$ as $e_i(s) = 1$ if $s \in \mathcal{S}_i$, and $e_i(s) = 0$ otherwise.

\item[--] Let $x = \ceil{\frac{n-k}{k-1}}$ and $X = \{n+1, \ldots, n+x\}$ where each $i \in X$ corresponds to an agent $i \in N$ such that $e_i(s) = 1$ for all $s \in S$.

\item[--] Finally, $k$ corresponds to the upper bound $\kappa$ on the sizes of coalitions.
\end{compactitem}

\medskip

Notice that $|N| = n + x$ and $\ceil{\frac{n+x}{k}} = x + 1$. Therefore, a partition $\pi$ of $\mathcal{G_I}$ contains at least $x + 1$ coalitions. Hence, there exists a coalition $C \in \pi$ which does not contain any agent in $X$. Thus, we can map back any partition $\pi$ of $\mathcal{G_I}$ to a feasible solution $\mathcal{C}$ of $\mathcal{I}$ by returning coalition $C$. Notice that coalition $C$ corresponds to a set cover $\mathcal{C}$ of $\mathcal{I}$ if and only if $U(C) = m$.

\paragraph{$(i)$} Since all the agents can attain the utility of $m$ by participating in a coalition with an agent in $X$, a partition $\pi$ is perfect if and only if $u_i(\pi) = m$ for all $i \in N$. Notice that such an allocation $\pi$ exists if and only if there exists a set cover $\mathcal{C}$ of $\mathcal{I}$ whose size is less than $k$. Therefore, deciding whether there exists a perfect partition in $(0, 1)$-HEGs is $\sf NP$-$\sf HARD$.

\paragraph{The rest of our results are proven via the same construction as above but by exploiting a reduction from the problem given in $(i)$.}

\paragraph{$(ii)$} Notice that if there exists a perfect partition, then any SO partition needs to be also perfect. If we could find a SO partition $\pi$ of $\mathcal{G_I}$ in polynomial-time, then we could decide if there exists a perfect partition in polynomial-time by simply checking whether $u_i(\pi) = m$ for all $i \in N$. However, this would be contradictory. Therefore, finding a SO partition is also $\sf NP$-$\sf HARD$ in $(0,1)$-HEGs.

\paragraph{$(iii)$} Assume for the sake of contradiction that we can find a PO partition $\pi$ of $\mathcal{G_I}$ in polynomial-time. Suppose that $\pi$ is not a perfect partition. Then, there cannot exist a perfect partition $\pi^*$ of $\mathcal{G_I}$ because otherwise $\pi^*$ would have Pareto dominate $\pi$ by definition. Then, we could decide if there exists a perfect partition by checking whether $u_i(\pi) = m$ for all $i \in N$,  which would be contradictory. Therefore, finding a PO partition is also $\sf NP$-$\sf HARD$ in $(0,1)$-HEGs.

\paragraph{$(iv)$} Let $\pi = (X_1, \ldots, X_x, C)$ be a partition of $\mathcal{G_I}$ such that $X \cap C = \emptyset$, $U(C) < m$ and $n + i \in X_i$ for all $i$. We show that there exists a perfect partition of $\mathcal{G_I}$ if and only if $\pi$ is not PO. Notice that $\pi$ is not a perfect partition, since $U(C) < m$. Therefore, it is clear that if $\pi$ is PO then there does not exist a perfect partition.

Now suppose that $\pi$ is not PO. Then, there exists a partition, say $\pi'$, that Pareto dominates $\pi$. Recall that there must be a coalition $C' \in \pi'$ such that $C' \cap X = \emptyset$. Note that $C' \neq C$ since otherwise no agent would be better off in $\pi'$ with respect to $\pi$. Moreover, since all agents except those in $C$ had a utility of $m$ in $\pi$, some agents would get worse off in $\pi'$, unless $U(C') = m$. Recall that this implies the existence of a perfect partition, and thus, we are done.

Finally, note that verifying a whether a given partition $\pi$ is PO in $\sf coNP$, since a partition $\pi'$ that Pareto dominates $\pi$ is a counterexample that is verifiable in polynomial-time. Therefore, verifying a PO partition is $\sf coNP$-$\sf COMPLETE$. \qed
\end{proof}

\section{Concluding Remarks and Future Directions}
\label{Section:Conclusion}

In this paper, we investigated computational aspects of HEGs and  we concluded that stable solutions based on individual deviations (namely NS partitions if the level of expertise is bounded by a constant, and CIS partitions in general) can be computed efficiently, whereas stable solutions based on group deviations (namely CS partitions) can be approximated within a factor of $1 - \frac{1}{e} \approx 0.632$. Moreover, we indicated that the existence guarantees given in HEGs arise from the fact that HEGs is a subclass of much more general class of hedonic games, which we referred to as monotone submodular HGCRP in the paper.

Due to reflecting diminishing returns property, we believe that monotone submodular HGCRP are likely to have some other applications, where our contributions in HEGs might be useful. For example, consider the following subclass of HGCRP which is played on a given undirected weighted graph $G = (V,E)$ where
\medskip
\begin{compactitem}
\item the vertices correspond to the agents, and
\item the joint utility of a coalition $C$ is the total weight of the incident edges of the corresponding vertices of $C$ in $G$.
\end{compactitem}
\medskip
We also need a upper bound of $\kappa$ on the sizes of coalitions since the above joint utility function is monotone. Moreover, the joint utility function is submodular. This is simply because one can build a corresponding HEG instance where each edge $e \in E$ corresponds to a skill $s$ such that the expertise of each agent $i$ on skill $s$ is the weight of edge $e$ if $e$ is incident to the corresponding vertex of $i$, and 0 otherwise. Therefore, the above class of hedonic games, which we refer to as \textit{hedonic vertex cover games} (HVCGs), is a subclass of HEGs, and hence monotone submodular HGCRP.

Notice that $\sf MAXIMUM$-$\sf JOINT$-$\sf UTILITY$ problem for a given HVCG instance can be approximated within $\sfrac{3}{4}$ by simply finding a $\sfrac{3}{4}$-approximate maximum vertex cover of $G$ with a cardinality constraint of $\kappa$, which can be done in polynomial-time due to \cite{MVC}. Therefore, by applying the method in Section \ref{Section:Approximation}, we can find a $\sfrac{3}{4}$-approximate CS of a given HVCG instance in polynomial-time. Note that this approximation ratio is better than what we can achieve for HEGs.

\bibliographystyle{abbrv}
\bibliography{references}

\begin{thebibliography}{10}

\bibitem{MVC}
A.~A. Ageev and M.~I. Sviridenko.
\newblock Approximation algorithms for maximum coverage and max cut with given
  sizes of parts.
\newblock In G.~Cornu{\'e}jols, R.~E. Burkard, and G.~J. Woeginger, editors,
  {\em Integer Programming and Combinatorial Optimization}, pages 17--30,
  Berlin, Heidelberg, 1999. Springer Berlin Heidelberg.

\bibitem{ResearchGroups}
J.~Alcalde and P.~Revilla.
\newblock Researching with whom? stability and manipulation.
\newblock {\em Journal of Mathematical Economics}, 40(8):869--887, dec 2004.

\bibitem{Pareto}
H.~Aziz, F.~Brandt, and P.~Harrenstein.
\newblock Pareto optimality in coalition formation.
\newblock In G.~Persiano, editor, {\em Algorithmic Game Theory}, pages 93--104,
  Berlin, Heidelberg, 2011. Springer Berlin Heidelberg.

\bibitem{Anonymous}
S.~Banerjee, H.~Konishi, and T.~Sönmez.
\newblock Core in a simple coalition formation game.
\newblock {\em Social Choice and Welfare}, 18(1):135--153, jan 2001.

\bibitem{Bogo}
A.~Bogomolnaia and M.~O. Jackson.
\newblock The stability of hedonic coalition structures.
\newblock {\em Games and Economic Behavior}, 38(2):201--230, feb 2002.

\bibitem{Handbook}
F.~Brandt, V.~Conitzer, U.~Endriss, J.~Lang, and A.~D. Procaccia.
\newblock {\em Handbook of Computational Social Choice}.
\newblock Cambridge University Press, USA, 1st edition, 2016.

\bibitem{HGCRP}
B.~Caskurlu and F.~E. Kizilkaya.
\newblock On hedonic games with common ranking property.
\newblock In {\em Algorithms and Complexity (CIAC 2019)}, volume 11485, pages
  137--148. Springer International Publishing, 2019.

\bibitem{B-preferences}
K.~Cechl{\'{a}}rov{\'{a}} and A.~Romero-Medina.
\newblock Stability in coalition formation games.
\newblock {\em International Journal of Game Theory}, 29(4):487--494, may 2001.

\bibitem{ApproximationPreservingReductions}
P.~Crescenzi.
\newblock A short guide to approximation preserving reductions.
\newblock In {\em Proceedings of the 12th Annual IEEE Conference on
  Computational Complexity}, CCC '97, pages 262--, Washington, DC, USA, 1997.
  IEEE Computer Society.

\bibitem{GroupActivitySelection}
A.~Darmann, E.~Elkind, S.~Kurz, J.~Lang, J.~Schauer, and G.~Woeginger.
\newblock Group activity selection problem.
\newblock In {\em Proceedings of the 8th International Conference on Internet
  and Network Economics}, WINE’12, page 156–169, Berlin, Heidelberg, 2012.
  Springer-Verlag.

\bibitem{Hedonic}
J.~H. Dreze and J.~Greenberg.
\newblock Hedonic coalitions: Optimality and stability.
\newblock {\em Econometrica}, 48(4):987, may 1980.

\bibitem{HedonicNets}
E.~Elkind and M.~Wooldridge.
\newblock Hedonic coalition nets.
\newblock In {\em Proceedings of The 8th International Conference on Autonomous
  Agents and Multiagent Systems - Volume 1}, AAMAS ’09, page 417–424,
  Richland, SC, 2009. International Foundation for Autonomous Agents and
  Multiagent Systems.

\bibitem{Partnerships}
J.~Farrell and S.~Scotchmer.
\newblock Partnerships.
\newblock {\em The Quarterly Journal of Economics}, 103(2):279--297, 1988.

\bibitem{ApproximationThreshold}
U.~Feige.
\newblock A threshold of ln n for approximating set cover.
\newblock {\em J. ACM}, 45(4):634--652, July 1998.

\bibitem{Survey}
J.~Hajdukova.
\newblock Coalition formation games: A survey.
\newblock {\em International Game Theory Review}, 08(04):613--641, 2006.

\bibitem{Swarm}
I.~Jang, H.-S. Shin, and A.~Tsourdos.
\newblock Anonymous hedonic game for task allocation in a large-scale multiple
  agent system.
\newblock {\em IEEE Transactions on Robotics}, 34(6):1534--1548, Dec 2018.

\bibitem{PLS}
D.~S. {Johnson}, C.~H. {Papadimitriou}, and M.~{Yannakakis}.
\newblock How easy is local search?
\newblock In {\em 26th Annual Symposium on Foundations of Computer Science},
  pages 39--42, 1985.

\bibitem{Submodular}
G.~L. Nemhauser, L.~A. Wolsey, and M.~L. Fisher.
\newblock An analysis of approximations for maximizing submodular set
  functions--i.
\newblock {\em Math. Program.}, 14(1):265–294, Dec. 1978.

\bibitem{Core}
G.~J. Woeginger.
\newblock Core stability in hedonic coalition formation.
\newblock In P.~van Emde~Boas, F.~C.~A. Groen, G.~F. Italiano, J.~Nawrocki, and
  H.~Sack, editors, {\em SOFSEM 2013: Theory and Practice of Computer Science},
  pages 33--50, Berlin, Heidelberg, 2013. Springer Berlin Heidelberg.

\end{thebibliography}

\newpage

\section*{Appendix}

\subsection*{Missing Proof of Observation \ref{Observation:Submodular}}

\begin{proof}
Let $X, Y \subseteq N$ be such that $X \subseteq Y$ and let $x \in N \setminus Y$. We first show that $U(X) \leq U(Y)$, which means that $U$ is monotone. We then show that $U(X \cup \{x\}) - U(X) \geq U(Y \cup \{x\}) - U(Y)$, which means that $U$ is submodular.

\paragraph{Monotonicity:} For a given skill $s \in S$, let $x^*$ and $y^*$ be the agents with maximum expertise on skill $s$, respectively in $X$ and $Y$. Since $X \subseteq Y$, $e_{x^*}(s) \leq e_{y^*}(s)$.   Then, $E_X(s) \leq E_Y(s)$ for all skills $s \in S$. Therefore, $\sum_{s \in S} E_X(s) \leq \sum_{s \in S} E_Y(s)$, which means that $U(X) \leq U(Y)$ by definition.

\paragraph{Submodularity:} Notice that $E_{X \cup \{x\}}(s) = \max \{e_x(s), E_X(s)\}$ for all skills $s \in S$ by definition. Therefore,
$U(X \cup \{x\}) - U(X) = \sum_{s \in S} \max \{e_x(s) - E_X(s), 0\}$ as we show below.
\begin{align*}
U(X \cup \{x\}) - U(X) &= \sum_{s \in S} E_{X \cup \{x\}}(s) - \sum_{s \in S} E_X(s) \\
&= \sum_{s \in S} \max \{e_x(s), E_X(s)\} - \sum_{s \in S} E_X(s) \\
&= \sum_{s \in S} \max \{e_x(s), E_X(s)\} - E_X(s) \\
&= \sum_{s \in S} \max \{e_x(s) - E_X(s), 0\}
\end{align*}

Notice that similarly $U(Y \cup \{x\}) - U(Y) = \sum_{s \in S} \max \{e_x(s) - E_Y(s), 0\}$. Recall that $E_X(s) \leq E_Y(s)$ for all skills $s \in S$. Then, $\max \{e_x(s) - E_X(s), 0\} \geq  \max \{e_x(s) - E_Y(s), 0\}$ for all skill $s \in S$. Therefore, $U(X \cup \{x\}) - U(X) \geq U(Y \cup \{x\}) - U(Y)$. \qed
\end{proof}

\end{document}